\newtheorem{theorem}{Theorem}[section]
\newtheorem{proposition}[theorem]{Proposition}
\newtheorem{definition}[theorem]{Definition}
\newcommand{\conc}{\mathbin\Vert}
\newcommand{\cond}{\, | \,}
\crefname{protocol}{Protocol}{Protocols}
\Crefname{protocol}{Protocol}{Protocols}
\title{A Commitment-based Authentication model for Key Exchange protocols}
\author[3]{\rm Iván Blanco Chacón}
\author[2]{\rm David Domingo Martín}
\author[1]{Ignacio Luengo Velasco}
\author[1,2]{\rm Rodrigo Martín Sánchez-Ledesma \Letter}
\affil[1]{Universidad Complutense de Madrid}
\affil[ ]{\texttt {rodrma01@ucm.es}}
\affil[2]{Indra Sistemas de Comunicaciones Seguras}
\affil[ ]{\texttt {\{rmsanchezledesma, ddomingo\}@indra.es}}
\affil[3]{Universidad de Alcalá}
\affil[ ]{\texttt {ivan.blancoc@uah.es}}
\date{\today}
\begin{document}

\maketitle
\begin{abstract}
In this work we construct an alternative model for Authenticated Key Exchange, intended to build a theoretic security framework for protocols whose characteristics may not always concur with the specifics of already existing models for authenticated exchanges. This model is constructed in a modular way, from the notion of commitment schemes and employing ephemeral information, therefore avoiding the exchange of long-term cryptographic material. From this model, we propose a number of Commitment-based protocols to establish a shared secret between two parties, and study their resistance over unauthenticated channels. This means analyzing the security of the protocol itself, and its robustness against Man-in-the-Middle attacks, by formalizing their security under this  model. The protocols are constructed from Key Agreement (KA) and Key Encapsulation (KEM) primitives, to show that this model can be applied to both established and new paradigms. We highlight the differences that arise naturally, due to the nature of KEM constructions, in terms of the protocol itself and the types of attacks that they are subject to. We provide practical go-to protocols instances to migrate to, both for KEM-based and KA-based cryptographic primitives.
\end{abstract}
\textbf{Keywords}: Authentication, Protocol, KEM, Commitment

\section{Introduction}

Secure key exchange protocols are a fundamental paradigm in cryptography, arguably one of the most important branches of computational mathematics. These protocols allow parties to derive a shared secret from the exchange of public cryptographic material. This secret is then usually used to encrypt the information transmitted between them, by means of  symmetric cryptography \cite{doi:10.1137/S0097539702403773}. It is therefore imperative for these protocols to be secure and provide assurance that, at the end of it, the generated shared secret is only known by the parties involved.

The paradigm of safely exchanging message over unreliable channel has been object of study for a long time: first, trying to model the paradigm, then constructing protocols within this model. Among them, we highlight \cite{10.1007/3-540-44987-6_28,cryptoeprint:1998/009}, which provide a complete solution for modeling the differences between reliable (i.e. authenticated) and unreliable (i.e. unauthenticated) channels, and proving the security of protocols, which include key agreement protocols, in the latter, and how to construct them from protocols on the former.

Currently, secure key exchange protocols are mostly based on KA-type solutions like Diffie-Hellman. In the decade of 1990's, a series of quantum algorithms were published, including Shor's algorithm \cite{Shor_1997}, that were shown to effectively solve the mathematical problems behind most public-key cryptography algorithms, including the discrete logarithm problem or the factorization of integers problem. These critical findings, along with the practical advance on the development of quantum computers and the difficulty and cost of such a migration, forced the cryptography community to start looking for alternatives that are resistant to quantum computing. 

In this scenario, in 2016, the National Institute for Standards and Technologies (NIST) announced the launch of a standardization process, with the objective of coming together with quantum-resistant cryptographic solutions for public-key cryptography needs, specifically suitable digital signature and key establishment algorithms. With regards to key establishment, the call specified that looked for Key Encapsulation Mechanisms (KEM) schemes, which not only are efficient and versatile but also easy to derive from Public Key Encryption (PKE) algorithms, based on transformations such as Fujisaki-Okamoto and variants \cite{10.1007/978-3-319-70500-2_12,10.1007/978-3-540-40974-8_12}.

Based on the wide spread of KA-based protocols, it is important that quantum-resistant KEM schemes are as little disruptive as possible, to facilitate their early adoption, in an attempt to minimize protected information that becomes vulnerable once \textit{Cryptographically Relevant Quantum Computers} (CRQC) become available.

\subsection{Motivation}
Regarding secure key exchange protocols over unauthenticated channels, the pioneer works of \cite{10.1007/3-540-44987-6_28,cryptoeprint:1998/009} have been thoroughly studied and applied, and subsequent important works have been derived after, like \cite{10.1007/11535218_33,10.1007/978-3-540-75670-5_1}. For protocols based on quantum-resistant primitives, there already exists works in the same direction. In \cite{10.1007/978-3-031-35486-1_24}, the authors expanded the unauthenticated model of \cite{10.1007/3-540-44987-6_28}, to consider authentication via KEM schemes, a critical step towards post-quantum secure protocols over unauthenticated channels.

In the original unauthenticated model of \cite{cryptoeprint:1998/009}, the authors present a model in which authentication is achieved by an initial authenticated exchange of cryptographic keys between the users which intend to communicate. Then, a number of techniques can be applied from these keys to provide the necessary authentication, even when the subsequent communications are performed over an insecure channel. 

This model builds a framework to construct secure and authenticated protocols to be performed over unauthenticated channels, if the steps of the model are followed. This means that an initial, out-of-bound authenticated exchange of long-term cryptographic keys is required. 

While this requirement can be appropriate for a vast number of protocols, it could pose a steep restriction in others, for example, among real-time secure communication protocols, in which the notion of a previous authenticated exchange between parties might not be possible, specially when this notion of authenticated cryptographic exchange usually involves present exchange between peers.

The present work addresses this situation by building an alternative model, in which authentication is achieved not through an initial authenticated exchange of cryptographic material but rather of a final authenticated exchange of (not  necessarily cryptographic) information derived from the protocol session established. This value would corroborated between parties before the communication starts, in an authenticated way.

In comparison, our model requires fewer public key operations to achieve authentication while eliminating the need for long-term secrets, at the expense of requiring and authentication exchange in each session established between peers, as opposed to a single cryptographic material exchange. 

More so, the definition of this model can serve as a formal security framework for a number of practical solutions which may already be employing the techniques defined in this work, or similar notions, which can fall within the scope of our model. Furthermore, it could be considered as the way to achieve the initial authenticated cryptographic key exchange phase required by the original AKE model.

\subsection{Our contributions}

The first objective of this work is to derive a new authentication model for secure protocols, based purely on ephemeral elements, avoiding the need for long-term cryptographic material. 

To this end, we first construct a variation of the AKE model presented on \cite{cryptoeprint:1998/009}, which varies the way authentication is achieved, reducing the number of public key operation needed to obtain it. This authentication will be obtained by a tampering verification of the elements exchanged, once the exchange run is finished. The measures employed are expected to be simple and provide easy adoption. 

Most importantly, we will show that all properties and results obtained under the original model translates to our model, including those pertaining to key exchange protocols defined on \cite{10.1007/3-540-44987-6_28}. This model will be the base to generate and formally verify the protocols that will appear on this work.

Due to the menace of \textit{quantum computing}, it is also the objective of this work to include KEM schemes into consideration, and to define KEM-based protocols who are easy to migrate to from KA-based analogues.

To this end, we analyze a number of KA-based protocols for secure key establishment, breaking them by the number of messages exchanged. We show that applying the results from our AKE model, the protocols are shown to be secure under unreliable channels. We will also analyze how the specific nature of Key Encapsulation Mechanism forces more difficulties in order to come up with secure KEM-based key establishment protocols, and the measures needed to ensure their security. Lastly, we will specify a number of KEM-based protocols, using the same classification as above and their formal security under our new model.

Special focus will be put around the figure of a MitM, whose aim is to eavesdrop the communication between the two parties, by means of modifying the messages exchanged during the key establishment protocol, in a way its presence goes unnoticed. We will analyze the possibilities of such an attacker in KA-based protocols and how to detect and avoid them, which will conform the base of the authentication measures of our model. Then we will see that the measures applied in this setting do not translate naturally to KEM-based protocols, under the same conditions, and we will detail what specific measures are required to block this kind of attacks, depending on the inherent conditions of the KEM in use. 

Despite the conceptual differences between the two paradigms, the measures under which our model will base the security over unreliable channels will be the same: along with the primary objective, which is to derive a shared secret key, the protocol will also generate another output, which will be referred to as \textit{session Authentication Value (AV)}. This value will represent a summary of the session between the two parties, as it will be directly inferred by the shared values used throughout the process. This value will determine whether the session exchange has been successful or it has been tampered.

It is important to note that this model does not attempt to substitute the original, well-established authentication model originally presented in \cite{10.1007/3-540-44987-6_28}. This model is the standard paradigm in a great number of practical applications and protocols, all of which  can assume the presence of an out-of-bound authenticated exchange of cryptographic material.

There are scenarios, however, in which this assumption might pose a requirement too big to be satisfied. It is under these scenarios we believe our model could be of assistance, as it makes no use of long-term cryptographic material to be authenticated. At the very least, this model helps capture the intuition behind purely ephemeral-based approaches to AKE, and the kind of tools they might require. 

The protocols presented throughout this work rely on an out-of-bound final authentication phase, denoted by $I_f$ and detailed below. Intuitively, under this phase the parties involved in a protocol run will verify, in an authenticated way, whether the session AV values generated by each of them in the protocol run match.

As the original model stated with the initialization phase $I$, this work does not seek to formalize how $I_f$ should look, beyond the requirement that $I_f$ conform an authenticated verification by the receiver, of the AV generated in the protocol run, against the sender's one.
The particularities of how this phase is achieved will be dependent on each practical application of the model and are outside of the scope of this work. 

Nevertheless, for completeness, some examples could include: the use of biometric indicators or other identity-authentication tools, the presence of trusted third party identity providers, or the existence of authenticated channels only suitable for small chunks of information. Any of these could be used as an authenticator of the information to be validated, which could be represented as short throughput of information. This notion of short value authentication is already in use in a number of secure communications protocols, thus our modified model provides a theoretic security framework for these kind of protocols.

Another key difference between the model presented in \cite{10.1007/3-540-44987-6_28} and this one is the nature of the elements employed within their respective authentication phases. The elements to be authenticated under the model presented in \cite{10.1007/3-540-44987-6_28} (i.e. the cryptographic public keys) are purely of cryptographic nature. As such, they require a specific information processing that cannot be limited to human-verifiable conditions. In other words, these elements are needed by the underlying cryptographic algorithms employed within the protocol run.

On the other hand, the elements exchanged under the model presented in this work are not of cryptographic nature. This is because the session AVs generated are just information that needs to be verified, and can be represented in different ways to accommodate $I_f$. Therefore, it can account for human-verifiable elements of authentication that do not require to be processed by the underlying protocol. This difference also opens up the possibilities for the way of exchanging the information.

\section{Preliminaries}

\begin{definition}
A public-key encryption scheme PKE = (KGen, Enc, Dec) consists of three algorithms and a
finite message space M. The key generation algorithm KGen outputs a key pair $(\pk, \sk)$, where pk also defines a randomness space R = R($\pk$). The encryption
algorithm Enc, on input $\pk$ and a message $\textit{m} \in M$, outputs an encryption c $\sample$ Enc($\pk$, \textit{m}) of \textit{m} under the public key $\pk$. If necessary, we make the used randomness of encryption explicit by writing \textit{c} := Enc(pk, \textit{m}; r), where r $\sample$ R and R is the randomness space. The decryption algorithm Dec, on input $\sk$ and a ciphertext \textit{c}, outputs either a message \textit{m} = Dec($\sk$, \textit{c}) $\in$ M or a special symbol $\bot \notin $ M to indicate that \textit{c} is not a valid ciphertext.
\end{definition}
\begin{definition}
A key encapsulation mechanism is a triple of algorithms KEM = (KGen, Encaps, Decaps). The key generation algorithm KGen generates a key pair $(\pk, \sk)$. The encapsulation algorithm Encaps, given a public key value $\pk$, outputs the pair (ct, K), where \textit{ct} is called the encapsulation of a random value \textit{x} that determines the shared key K. The deterministic decapsulation algorithm Decaps, given the secret key $\sk$ and the encapsulation, outputs the same key K by extracting the random value from the encapsulation \textit{ct}.
\end{definition}
\begin{definition}
(Indistinguishability against KEM scheme) We define the IND-\textit{atk} game, $atk \in \{CPA, CCA\}$ as in the figure below and the IND-\textit{atk} advantage of an adversary $\mathcal{A}$ against the above KEM scheme as 
\begin{align*}
    \advantage{\texttt{IND-atk}}{\texttt{KEM}}[(\mathcal{A})] := 2 \cdot\abs{\prob{\texttt{IND-atk$^{\mathcal{A}}$} \Rightarrow 1} - \frac{1}{2}}
\end{align*}

\begin{pchstack}[boxed, center, space=1em]
  {\procedure[linenumbering]{$\indcpa^\adv$ $\pcbox{\indcca^\adv}$}{\phantomsection\label{kemgame}
      (\pk,\sk) \sample \texttt{KGen}  \\
      b \sample \bin  \\
      (ct^*, K_0^*) \sample \texttt{Encaps}(\pk)  \\
      K_1^* \sample \textit{K}  \\
      b' \sample \adv^{\pcbox{Decaps}}(\pk, ct^*, K_b^*) \\
      \pcreturn b = b'
    }}

  {\procedure[linenumbering] {Oracle $Decaps(ct)$}{%
        \text{if ct = ct$^*$} \\
        \text{ return $\bot$} \\
        else \\
        \text{return \texttt{Decaps}(sk,ct)}
   }}
\end{pchstack}
\end{definition}

\begin{definition}
A Commitment Scheme is a triple (Setup, Com, Open) such that:
\begin{itemize}
    \item $CK \sample \texttt{Setup}(\secparam)$ generates the public commitment context. It is often omitted mentioning the public context CK when clear.
    \item for any $ m \in M$, message space, $(c, d) \sample \texttt{Com}$($m$) is the commitment/opening pair for $m$. $c$ = $c(m)$ serves as the commitment value, and $d$ = $d(m)$ as the opening value.
    \item Open($c$, $d$) = $m'\in M \cup \{\bot\}$, where $\bot $ is returned if c is not a valid commitment to any message.
\end{itemize}
We define the hiding and binding games and advantages of an adversary $\mathcal{A}$ and security parameter \textit{n} as follows:
\begin{pchstack}[boxed, center, space=1em]
      {\procedure[linenumbering]{Hiding$^\adv(n)$}{\phantomsection\label{Hiding}
      CK \sample \texttt{Setup}(\secparam) \\
      (x_0, x_1) \sample \adv_1 (\secparam)  \\
      b \sample \bin  \\
      (c(x_b), d(x_b)) \sample \texttt{Com}(x_b)  \\
      b' \sample \adv_2(c(x_b)) \\
      \pcreturn b = b'
    }}
      {\procedure[linenumbering]{Binding$^\adv(n)$}{\phantomsection\label{Binding}
      CK \sample \texttt{Setup}(\secparam) \\
      (c, d, d') \sample \adv (\secparam)  \\
      m = Open(c, d)  \\
      m' = Open(c, d')  \\
      \pcreturn (m \neq m') \wedge (m, m' \neq \bot)
    }}
\end{pchstack}
\begin{center}
    \begin{align*}
        \advantage{\texttt{Hiding}}{\texttt{CS}}[(\mathcal{A}, n)] := 2 \cdot\abs{\prob{\texttt{Hiding$^\adv$} \Rightarrow 1} - \frac{1}{2}} \leq \text{negl(n)} \\
        \advantage{\texttt{Binding}}{\texttt{CS}}[(\mathcal{A}, n)] := \prob{\texttt{Binding$^\adv$} \Rightarrow 1} 
        \leq \text{negl(n)} \\
    \end{align*}
\end{center}
\end{definition}

\begin{definition}
A Robust Commitment Scheme (RCS) is defined to be a Commitment Scheme such that it verifies the following additional security properties:
\begin{pchstack}[boxed, center, space=1em]
      {\procedure[linenumbering]{Strong-Binding$^\adv(n)$}{\phantomsection\label{Strong-Binding}
      CK \sample \texttt{Setup}(\secparam) \\
      (c, d, d') \sample \adv (\secparam)  \\
      m = Open(c, d)  \\
      m' = Open(c, d')  \\
      \pcreturn (m = m') \wedge (d' \neq d) \wedge (m \neq \bot)
    }}
      {\procedure[linenumbering]{CR$^\adv(n)$}{\phantomsection\label{CR-CS}
      CK \sample \texttt{Setup}(\secparam) \\
      (c, c', d) \sample \adv (\secparam)  \\
      m = Open(c, d)  \\
      m' = Open(c', d)  \\
      \pcreturn (m, m' \neq \bot) \wedge (c' \neq c)
    }}
\end{pchstack}
\begin{center}
    \begin{align*}
        \advantage{\texttt{Strong-Binding}}{\texttt{CS}}[(\mathcal{A}, n)] := \prob{\texttt{Strong-Binding$^\adv$} \Rightarrow 1} 
        \leq \text{negl(n)} \\
        \advantage{\texttt{CR}}{\texttt{CS}}[(\mathcal{A}, n)] := \prob{\texttt{CR$^\adv$} \Rightarrow 1} 
        \leq \text{negl(n)} \\
    \end{align*}
\end{center}
\end{definition}

\begin{definition}
    A cryptographic hash function (CHF) is a function $H: \{0,1\}^* \rightarrow \{0,1\}^n$, which takes arbitrary length inputs and returns outputs of fixed length $n$. Furthermore, they verify:
    \begin{itemize}
        \item Preimage resistance: Given $y \in \{0,1\}^n$, it is hard to find $x$ such that $H(x) = y$.
        \item Second preimage resistance: Given an input $x$, it is hard to find $x' \neq x$ such that $H(x) = H(x')$.
        \item Collision resistance:  It is hard to find two inputs $x \neq x'$ such that $H(x) = H(x')$.
    \end{itemize}
\end{definition}

Note that, in the above definition, it remains to define what \textit{hard} means. For the sake of this definition, we will consider that to mean that it requires efforts comparable to a brute-force search, i.e. $O(2^n)$. This definition implies that, if $n$ is chosen to be small enough, this wont actually be 'hard' in the computational sense.

Throughout this work, we will model the random functions employed in the commitment model as random oracles. This is done to simplify the proofs when considering the probability of a random value $x$ to have a certain value $H(x) = y$, which under the ROM will be $2^{-n}$. We make the convention that $H(x_1 \conc ... \conc x_n) = \bot$ if any $x_i = \bot$.

Moreover, as per \cite{10.1007/978-3-319-70500-2_12}, we make the convention that, in order to keep
record of the queries issued to $H$, we will use a hash list $L_H$ that contains all tuples $(x,H(x))$ of arguments that H was queried on and the respective answers $H(x)$. This way, we can ensure that the number $q$ of queries the adversary makes are different, and that the random oracle is deterministic.

\begin{pchstack}[boxed, center, space=1em]
  {\procedure[linenumbering] {Oracle $H(m)$}{%
        \text{if $\exists r$ s.t $(m, r) \in L_H$} \\
        \text{ return $r$} \\
        r \sample \{0, 1\}^N \\
        L_H = L_H \cup \{(m, r)\} \\
        \text{return $r$}
   }}
\end{pchstack}

Now, we define the central element of the authentication security of our model to give a proper definition for the element that will be used to detect session interference by an attacker:.
\begin{definition}\label{AV_G}
We define the \textit{Authentication Value (AV)} as a deterministic digest of shared elements involved within a protocol key establishment run. Formally, AV are constructed as
\begin{align*}
    G(A_1, ..., A_j) := G(enc(A_1) \conc ... \conc enc(A_j))
\end{align*}
where $A_i$ , $i \in \{1, .., j\}$ are elements known to both parties at the end of the protocol session, G is a CHF, and $enc$ is an encoding that ensures that collisions cannot happen between inputs. In other words, $enc(x_1) \conc enc(y_1) = enc(x_2) \conc enc(y_2) \iff (x_1, y_1) = (x_2, y_2).$

For the purpose of this work, we will define $enc(A_j)$ as $j \conc len(A_j) \conc A_j$, where $len(x)$ is the length operator. This represents a standard Type-Length-Value (TLV) encoding.
\end{definition}
Within each protocol, the specific values that will conform this AV session value will be established.

A compromise must be reached in terms of the actual output length $n$ of the CHF used to derive the \textit{AV} values. Due to practical uses regarding real-time authentication, one would desire its length to be as small as possible, in order to facilitate usability. On the other hand, the bigger this length value is, the higher the protection it provides, as increases the computational complexity of the attacks against it.

We also define an interesting property that will appear repeatedly throughout the work, defined as the \textit{Combined Collision} advantage of an adversary against a CHF:
\begin{definition}[Combined collision advantage] Let $l \in \mathbb{N}$ and $\mathcal{Y}$ be an algorithm that returns $y \in Y$ with a certain probability and $\bot$ otherwise. We define the Combined Collision game and the Combined Collision advantage of a (possibly unbounded) adversary $\mathcal{A}$ against a CHF \textbf{G} as 
\begin{align*}
    \advantage{\texttt{combined}}{\texttt{CHF}}[(\mathcal{A}, l, \mathcal{Y})] := \prob{\texttt{combined$^{\mathcal{A}}(l, \mathcal{Y})$} \Rightarrow 1}
\end{align*}

\begin{pchstack}[boxed, center, space=1em]
  {\procedure[linenumbering]{$combined^\adv(l, Y)$}{\phantomsection\label{combined_game}
      T \sample \bin^l \\
      y \sample \adv^{G, \mathcal{Y}}(T) \\
      \pcreturn (G(y) = T) \wedge (T \neq \bot) \wedge (y \in Y \cup \{\bot\})
    }}
\end{pchstack}
where $l$ represents the output length of G and Y represents the domain from which $y$ must be generated.
\end{definition}
The advantage of an adversary against the above game will be critical to the (in)security of the protocols presented in the following sections. In many cases, this game will model the actual advantage of an adversary in forging the authentication.

The following results provides more details regarding the actual advantage against the above game:
\begin{proposition}\label{combined_adv}
    For any adversary $\adv$ capable of at most $q$ queries to a CHF random oracle $\mathcal{O}$, it holds that
    $$\advantage{\texttt{combined}}{\texttt{CHF}}[(\mathcal{A}, l, \mathcal{Y})] \leq \frac{q}{2^l} \cdot \delta$$
    where $l$ represents the output length of $\mathcal{O}$, and $\delta := \advantage{\texttt{sample}}{\mathcal{Y}}[(\mathcal{A})]$ represents the probability of algorithm $\mathcal{Y}$ to return $y \in Y$.
\end{proposition}
\begin{proof}
Let \textit{F} be the event that an adversary $\adv$ wins the game above. This happens if the adversary is able to find an $y \in Y$ such that $O(y) = T$. Note that the requisite $y \neq \bot$ comes directly from $T \neq \bot$.

Since the adversary is limited to, at most, $q$ (distinct) queries to $\mathcal{O}$, then at most $q$ different values $y_i$ can be consulted. Therefore, define $F_i$ as the event of the $i$-th (distinct) test query to be correct, i.e., the $i$-th element $y_i \in Y_i := Y \setminus \{y_1, \cdots, y_{i-1}\}$ is consulted to the random oracle yields a collision $O(y_i) = T$. 

Then,
\begin{align*}
    & \prob{F_i} := \prob{y_i \in Y_i \wedge O(y_i) = T \cond \neg F_1, ..., \neg F_{i-1}} = \\
    & \prob{O(y_i) = T \cond y_i \in Y_i \cap \neg F_1, ..., \neg F_{i-1}} 
    \cdot \prob{y_i \in Y_i \cond \neg F_1, ..., \neg F_{i-1}} \leq \frac{\delta}{2^l}
\end{align*}
as 
\begin{align*}
    & \prob{y_i \in Y_i \cond \neg F_1, ..., \neg F_{i-1}} = \advantage{\texttt{sample}}{\mathcal{Y}}[(\mathcal{A})] \leq \delta \\
    & \prob{O(y_i) = T \cond y_i \in Y_i \cap \neg F_1, ..., \neg F_{i-1}} = \frac{1}{2^l}
\end{align*}
because we cannot ensure that algorithm $\mathcal{Y}$ will return $y_i \neq y_j$, for some $j < i$.

Therefore, applying the union bound on the events $F_1, ..., F_q$, we arrive to the desired result. Note that this event is defined as is because, to succeed, at one point the adversary has to consult to the oracle with values $y_i \in Y \cup \{\bot\}$.
\end{proof}

Now, if we assume that an adversary is able to come with a set of different elements $y_i \in Y$, then we can have an exact measure of the hardness of the above problem
\begin{proposition}\label{combined_adv_with_Y}
    For any adversary $\adv$ capable of at most $q$ queries to a CHF random oracle $\mathcal{O}$, given access to a set $Y_p$ of distinct values $y_i \in Y$, it holds that
    $$\advantage{\texttt{combined}}{\texttt{CHF}}[(\mathcal{A}, l, \mathcal{Y})] = 1 - \left(1 - \frac{1}{2^l}\right)^{\min{(q, |Y_p|)}}$$
    where $l$ represents the output length of $\mathcal{O}$.
\end{proposition}
\begin{proof}
The probability of failure means that each of the distinct candidates in $Y_p$ do not yield a satisfactory collision. For a single query, this means
\begin{align*}
    & \prob{\mathcal{O}(y_i) \neq T} = 1 - \frac{1}{2^l}
\end{align*}
and since the oracle outputs are independent and the values in $Y_p$ are distinct, we have that
\begin{align*}
    & \prob{\text{no coll}} := \bigcap_{i = 1}^q \prob{\mathcal{O}(y_i) \neq T} = \left(1 - \frac{1}{2^l}\right)^q
\end{align*}
as we can only make $\min{(q, |Y_p|)}$ evaluations, depending upon the maximum number of queries $q$ of the adversary and the cardinal of the set $Y_p$.

Therefore, 
$$\advantage{\texttt{combined}}{\texttt{CHF}}[(\mathcal{A}, l, \mathcal{Y})] = 1 - \prob{\text{no coll}} = 1 - \left(1 - \frac{1}{2^l}\right)^{\min{(q, |Y_p|)}}$$
\end{proof}

\begin{definition} (Decisional Diffie-Hellman Assumption) \cite{Canetti2011} Given a finite, cyclic group $G$ with generator $g$, and three elements, the Decisional Diffie–Hellman (DDH) problem is to decide whether there exist integers $x$, $y$ such that $a = g^x$, $b = g^y$, and $c = g^{xy}$. The DDH assumption states that this problem is hard to solve for some infinite sequence of groups 
of increasing order.
\end{definition}

\section{Authenticated Key Exchange model variation}

The main dilemma that this work and others before try to deal with is the execution of safe communication protocols over non reliable channels. This means, they are carried on under channels where messages might not only be delayed or lost, but also actively modified or even completely substituted. 

Therefore, is it necessary to come with ways of ensuring that no message can be modified, i.e., to provide assurance that what is received by a party is what the other party actually intended. Over key exchange protocols, this amounts to ensure that a safe and secret shared key has been established between parties, without any knowledge of it by any other party not intended.

\subsection{CK model}

Among the cryptographic literature pertaining to this problematic, the work of \cite{10.1007/3-540-44987-6_28} and \cite{cryptoeprint:1998/009} is highlighted. In it, a theoretic modeling of this situation is first described. Then, a number of techniques are defined to construct secure protocols, and formally prove their security under this model, over unauthenticated channels. The following definitions are relevant to the model:
\begin{definition} \cite{10.1007/3-540-44987-6_28}: \textbf{Message-driven protocols} are collections of interactive procedures, run on currently by parties, that specify a particular processing of incoming messages and the generation of outgoing messages. Protocols are initially triggered at a party by an external 'call' and later by the arrival of messages. Upon each of these events, and according to the protocol specification, the protocol processes information and may generate and transmit a message and/or wait for the next message to arrive.
\end{definition}

\begin{definition}
\cite{10.1007/3-540-44987-6_28}: Protocols can trigger the initiation of sub-protocols (i.e. interactive subroutines) or other protocols, and several copies of such protocols may be simultaneously run by each party. Each copy of a protocol run at a party is defined a \textbf{session}. Technically, a session is an interactive subroutine executed inside a party. Each session is identified by the party that runs it, the parties with whom the session communicates and by a session-identifier. These identifiers are used in practice to bind transmitted messages to their corresponding sessions.
\end{definition}
Regarding the theoretic model construction, first the \textit{Authenticated Model} (AM) is defined: Under this model, any attacker willing to confront a set of parties $P_1, .., P_n$ is able to delay and hold onto messages, but is forced, for every message that is delivered, to be done in good faith, without any modification to it. 

When a message is sent by a party, the message is added to a set $M$ of undelivered messages. When a party receives a message, this message is deleted from $M$ (therefore, since the adversary is restricted to deliver faithful messages, if a party $P_j$ receives a message $m$ from a party $P_i$, this must mean that $P_i$ sent message $m$ to $P_j$). 

The attacker is also able to corrupt any party at wish. This means, learning all information (public and private) pertaining to it. If a party is corrupted, then the attacker is allowed to impersonate them, but such action is recorded accordingly in the party's protocol output. The global output of the protocol is the concatenation of the output of each party at the end of it, plus the output of the attacker. 

Each party's output is engrossed by every activation regarding the protocol flow. The adversary's output is defined to be '\textit{all the information seen (and derived) by the adversary throughout the computation, together with its random input. This includes registration of important events that occurred during the execution, such as corruption of parties}'.

Then, the \textit{Unauthenticated Model} (UM) is defined: Under this model, an attacker is not constrained to faithful delivery of message. Messages can be partially modified or even forged altogether. The key component is that, under this model, the protocol is augmented with what is called an \textit{Initialization function} $I$, which models an '\textit{initial phase of out-of-band and authenticated information exchange between the parties}'. This component is essential to the UM model, as seeks to provide the necessary authentication source that is lost once migrating to insecure channels.

In all the literature that works upon similar ideas, this initialization function is used to safely exchange public or secret information, in a way that the information exchanged conforms the seed in which authentication is build upon. \\
Now, it is require to define what does it mean for two protocols to be equivalent, under these models:
\begin{definition} \cite{10.1007/978-3-031-35486-1_24}\label{AM-UM-DistExp}:
The AM-UM distinguishing experiment, $G^{AM-UM-dist}_{\Pi_{AM}-\Pi_{UM}}(\mathcal{D})$ proceeds as follows:
\begin{enumerate}
    \item A uniform bit b $\in \{0, 1\}$ is chosen. If b = 0, $\mathcal{D}$ will interact with the AM protocol $\Pi_{AM}$ and an adversary in it. Otherwise, $\mathcal{D}$ will interact with the UM protocol $\Pi_{UM}$ and an adversary in it.
    \item To conclude the experiment, $\mathcal{D}$ will halt and output b'.
    \item The experiment will output 1 if and only if b = b'.
\end{enumerate}
We define the advantage of the distinguisher $\mathcal{D}$ to be
    \begin{align*}
        \advantage{\textnormal{AM-UM-dist}}{\Pi_{AM}-\Pi_{UM}}[(\mathcal{D})] = 2\cdot\abs{P[G^{AM-UM-dist}_{\Pi_{AM}-\Pi_{UM}}(\mathcal{D}) = 1] - \frac{1}{2}}
    \end{align*}
\end{definition}

\begin{definition} \cite{10.1007/978-3-031-35486-1_24} \label{am-um-emulate}: 
Let $\Pi_{AM}$ and $\Pi_{UM}$ be message-driven protocols for $n$ parties. We say that $\Pi_{UM}$ $\epsilon-$emulates $\Pi_{AM}$ over unauthenticated networks if, for any UM-adversary \textit{U}, there is a AM-adversary \textit{A} such that, for each distinguisher $D$ playing the AM-UM distinguishing game, we have
\[\advantage{\textnormal{AM-UM-dist}}{\Pi_{AM}-\Pi_{UM}}[(\mathcal{D})] \leq \epsilon\]
This means, their respective global outputs are computationally indistinguishable.
\end{definition}

Then, the notion of a compiler is defined, to provide a bridge between protocols in both models:

\begin{definition} \cite{cryptoeprint:1998/009} \label{def_compiler}: A compiler C is an algorithm that takes for input descriptions of protocols and outputs descriptions of protocols. An authenticator is a compiler C where for any protocol $\pi$, the protocol C($\pi$) emulates $\pi$ over unauthenticated networks.
\end{definition}

\subsubsection{Key exchange protocols and SK security} \label{or_sk_security}
A key exchange protocol is a special type of the message-driven protocol described above, introduced in \cite{10.1007/3-540-44987-6_28}: the local output of said protocols are of the form ($P_i$, $P_j$, $s$, $\kappa$), representing the parties involved, the session identifier $s$ and the key derived from the execution of it. Regarding the status of a session, it is considered that if the key value $\kappa$ is null, then the exchange was unsuccessful and is considered an aborted session. Otherwise, the session is considered completed and $\kappa$ is stored as the \textit{secret} of the exchange. The state of such sessions consist of:
\begin{itemize}
	\item The status of the SK-session: completed, aborted or in process.
	\item Any intermediate value generated from the protocol execution, and relevant to complete it.
	\item The session key $\kappa$ derived from the exchange.
\end{itemize}
A complete definition of the concept of \textit{SK security} also needs to be introduced. SK security, intuitively, is about adding to any adversary the possibility to \textit{test} a session key at any given time during the execution of a simulation. 

In it, the challenger will draw a random bit and, depending on it, the adversary will either learn the true value of the key derived from the session they chose (among completed, uncorrupted sessions) or a value drawn at random. The challenge to the adversary is being able to correctly distinguish between them. 

The security of the exchange is therefore tied to any adversary's possibility of distinguishing. Formally, given the following adversary's possibilities:
\begin{itemize}
    \item \textit{NewSession(A, B, s, r)}: the adversary issues the NewSession query to party $A$, specifying its intended receiver $B$, the session identifier $s$, and the role $r$ (initiator or responder) of $A$ in the session. $A$ will follow the protocol definition and may return an output message intended for $B$.
    \item \textit{Send(A, B, m)}: represents activation of $A$ by an incoming message $m$ (possibly including a session identifier) from party $B$. $A$ will follow the protocol and may reject, accept, or return an output message intended for $B$.
    \item \textit{Corrupt(A)}: the adversary learns the whole state of $A$, including any information necessary to impersonate $A$ in an authenticated way (which may pertain to learn any long-term secret or other information, depending on each particular model). The corruption event is recorded in the local output of $A$. Subsequently $A$ can never be activated but the adversary can take the role of $A$ in the protocol.
    \item \textit{RevealKey(A, B, s)}: the adversary learns the session key accepted in the session $s$ by $A$ with party $B$, if it exists. The reveal event is recorded in the local output of $A$.
    \item \textit{RevealState(A, B, s)}: the adversary learns the state information associated with session $s$ at $A$, such as ephemeral keys. The reveal state event is recorded in the local output of $A$.
    \item \textit{Expire(A, B, s)}: if there is a completed session $s$ at $A$ with $B$ then any session key associated with that session is deleted from the memory of $A$. The \textit{Expire} event is recorded in the local output of $A$.
    \item \textit{Test(A, B, s)}: this query can be asked only once and can only be made to a completed session $s$ at $A$ with party $B$. Furthermore there cannot have been any of the following queries made: \textit{RevealKey(A, B, s)} or \textit{RevealState(A, B, s)} or \textit{Corrupt(A)} or \textit{Corrupt(B)}. If the bit $b$ specified by the challenger is $b = 1$ then the session key is returned. Otherwise $b = 0$ and a random key from the keyspace is returned.
\end{itemize}
we define the SK-security of a protocol as follows:
\begin{definition} \cite{10.1007/978-3-031-35486-1_24} The definition of the SK-exchange indistinguishability experiment $ G^{Key-Ind}_{\Pi}(\mathcal{A})$ is as follows
    \begin{enumerate}
        \item The challenger chooses a bit $b$ needed to set the \textit{Test} response.
        \item Every party and additional initial material is initialized.
        \item $\mathcal{A}$ may issue any queries as described above.
        \item At one point, $\mathcal{A}$ stops and outputs a bit $b^{\prime}$ to indicate its guess for $b$, based on the
response to the \textit{Test} query. The experiment outputs 1 iff $b^{\prime} = b$.
    \end{enumerate}
\end{definition}

\begin{definition} \cite{10.1007/978-3-031-35486-1_24} A key exchange protocol $\Pi$ is $\epsilon$-SK-secure if the following holds for any adversary $\mathcal{A}$:
    \begin{enumerate}
        \item two honest parties (i.e. uncorrupted parties who faithfully execute the protocol instructions) completing matching sessions of the protocol $\Pi$ will output the same key, except with negligible probability.
        \item the advantage of the adversary $\mathcal{A}$ in the key indistinguishability experiment described above is:
        \begin{align*}
            \advantage{Key-Ind}{\Pi}[(\mathcal{A})] = 2\cdot\abs{P[b' = b] - \frac{1}{2}} \leq \epsilon
        \end{align*}
    \end{enumerate}
\end{definition}

\subsection{An alternative commitment-based model}

The framework for AKE models presented above provides a clear picture of the elements needed to model it: first, two environments are required, the AM and the UM. The former represents the authenticated environment, one in which interactions cannot be modified and/or falsified. The latter represents the unauthenticated environment, one in which untrustworthy interactions are to be expected.

In the UM, the model is augmented with an authentication function $I$, which seeks to provide the authentication part of the model. In \cite{cryptoeprint:1998/009}, this function represents an initial phase of authenticated, out-of-bound exchange of cryptographic material. This material is later used to construct the compilers to transform AM-protocols into UM-protocols.

A distinguisher experiment between a protocol in the AM and the UM is also required, which seeks to capture the increase distinguishability capabilities that an adversary gets if the protocol is executed over unauthenticated channels. This definition is employed to define the advantage of an adversary against a protocol in the UM.

In order to transform protocols in the AM to protocols in the UM, the notion of a \textit{compiler} is defined: its role is to create a protocol in the UM from a protocol in the AM such that they are indistinguishable to an adversary. The model then must define ways to construct said compilers from $I$ and its specifics, so they can achieve authentication. 

In the model presented in this work, the above concepts are introduced as follows:

\paragraph*{The Authenticated Model (AM).}
The AM is defined in the exact same terms as the AM from the model above: an adversary which desires to trick parties $P_1, ..., P_n$ is able to delay and/or hold onto messages, but those that get delivered must be done without any alteration to their content, and to the intended recipient. When a message is sent by a party, the message is added to a set M of undelivered messages. When a party receives a message, this message is deleted from $M$, meaning that $P_i$ sent message $m$ to $P_j$ ). 

The attacker has the ability to corrupt any party in the protocol. This means, learning all information pertaining to it. If a party is corrupted, then the attacker is allowed to impersonate them, but such action is recorded accordingly in the party’s protocol output. The global output of the protocol is defined as the concatenation of the output of each party at the end of it, plus the output of the attacker. 

Each party’s output is engrossed by every activation regarding the protocol flow. The adversary’s output is defined to be all the information gathered by the adversary throughout the computation. This includes registration of important events that occurred during the execution, such as corruption of parties.

\paragraph*{The Unauthenticated Model (UM).} The UM in the commitment-based model is very similar to the original UM, but with one key difference: the source of authentication $I$ to be employed. 

The UM is defined as the AM, but with one exception regarding the abilities of an adversary: now it is not restricted to trustful messages (i.e. messages that were in $M$). Instead, it can send messages that were intended to a party $P_j$ to other party $P_k$, or even send messages that are made up.

Under the AKE model presented in this work, the UM is augmented with a \textit{finalization function} $I_f$ which models a final phase of out-of-band and authenticated verification of information between parties under a protocol run. This function models the necessary final phase of authenticated verification of the AV elements that will be generated within each protocol run between any two parties, and that represents the core of the presented AKE model. Formally, we can define $I_f$ as a function
\begin{align*}
    & \{P_1, ..., P_n\} \times \{P_1, ..., P_n\} \times \{0,1\}^n \times \{0, 1\}^n \longrightarrow \{0,1\}\\
    & (P_i, P_j, E, E') \longmapsto \left\{ \begin{array}{lc}
    1 & : \text{$P_j$ authenticates the party $P_i$ $\wedge$ E = E'} \\
    0 & : \text{Otherwise}
    \end{array} \right.
\end{align*}
Alternatively, $I_f$ can be modeled as an AM verification by party $P_j$ of its AV element $E$ against the AV element $E'$ sent by party $P_i$, in the AM.

We also need to define what it means for an adversary to corrupt a party $P_i$, i.e., the capacity to impersonate in an authenticated way the party $P_i$. Therefore, in our model, the corruption of party $P_i$ is defined as the capacity to override the authenticated validation performed by $I_f$ as if it were the corrupted party. The specific way in which this can be practically achieved depends, naturally on the specific of each $I_f$ function employed.

The two environment models, the AM and UM, under the commitment-based AKE model, are related through Definitions \ref{AM-UM-DistExp} and \ref{am-um-emulate}, which are used in the exact same terms under the present model. Likewise, the concept of \textit{compilers}, as defined in Definition \ref{def_compiler}, will also be employed under the presented model.

\subsubsection{Differences between the two models: the authentication phase}
In the model presented in this work, the authentication is not provided by an initial authenticated exchange of cryptography material, but rather of a \textit{final} authenticated verification of certain values that will be generated within each protocol run. Thus, upon finalization of each protocol's message exchange and, as part of its execution, each party will generate an element as specified by the protocol, from elements shared or generated during the protocol execution. 

This value will be verified by the receiver to match the corresponding generated by the initiator, in an authenticated way, by means of a '\textit{verification phase}' modeled by $I_f$. If this verification is satisfactory, then we establish that \textit{m} has been correctly received by party $P_j$ from party $P_i$, and in particular $P_j$ validates, in an authenticated way, $P_i$ as the party with whom they have executed the protocol. 

This means that a message will only be considered as successfully received by a party $P_j$ if it involves a successful validation under $I_f$. The idea behind this process is that it will mean that $m$ was indeed sent by party $P_i$ to party $P_j$, as the possibilities for an adversary to forge this value and get validated between $P_j$ and $P_i$ will be negligible.

This component is essential to the presented model, as seeks to provide the necessary authentication source that is lost once migrating to insecure channels, as the previous UM model did with the initialization function $I$.

While the specifics of how these kind of functions work will be dependent upon each protocol description, the theoretical behavior must always be the same: an authenticated verification of a value generated by each party during the protocol run, in which the responder must corroborate the value generated by the initiator with it's own.

Notice that this verification phase and the values generated to perform the validation will be constructed from elements formed and transmitted during the protocol run. The idea behind is that, if the protocols are constructed in a certain way, any modification of the values exchanged by the adversary will not be validated during $I_f$. 

Note that the definition of $I_f$ implies that authentication is not mutual, i.e. the receiver authenticates the identity of the sender, only. This behavior is also present on the original AKE model, as only the identity of the party holding the private key employed in the protocol is authenticated.

\subsubsection{Constructing compilers in the new model: commitment-based schemes}
We provide the definition of MT-authenticators, as introduced in \cite{cryptoeprint:1998/009}, with regards to the newly introduced UM model:

Consider the message transmission (MT) protocol designed for authenticated networks. The protocol takes the empty input. Upon activation within $P_i$ on external request ($P_j$;\textit{m}), party $P_i$ sends the message ($P_i$;$P_j$;\textit{m}) to party $P_j$, and outputs '$P_i$ sent \textit{m} to $P_j$'. Upon receipt of a message ($P_i$;$P_j$;\textit{m}), $P_j$ outputs '$P_j$ received \textit{m} from $P_i$'.

Then, let $\lambda$ be a protocol that emulates the MT protocol over unauthenticated networks. Define a compiler $C_\lambda$ as follows:

Given a protocol $\pi$, the generated
protocol $\pi'$ = $C_\lambda(\pi)$, running within party $P_j$, first invokes $\lambda$. Next, for each message that $\pi$ sends, $\pi'$ activates $\lambda$ with external request for sending that message to the specified recipient. Whenever
$\pi'$ is activated with some incoming message, it activates $\lambda$ with this incoming message. When $\lambda$ outputs '$P_j$ received \textit{m} from $P_i$', protocol $\pi$ is activated with incoming message \textit{m} from $P_i$.

Under the UM, $\lambda$ outputs '$P_j$ received \textit{m} from $P_i$' when party $P_j$ successfully validates the AV generated on its end from message \textit{m} transmission against the value generated by $P_i$, by means of $I_f$.

The objective of the remainder of this section is to show that, under this new UM model, equivalent results to the ones provided on \cite{cryptoeprint:1998/009} and \cite{10.1007/3-540-44987-6_28} hold. We start by showing that, under a MT-authenticator $\lambda$ as defined above, the compiler $C_\lambda$ constructed from it is an authenticator:
\begin{theorem}
Under the new UM model, let $\lambda$ be an MT-authenticator, and
let $C_\lambda$ be a compiler constructed based on $\lambda$ . Then $C_\lambda$ is an authenticator.
\end{theorem}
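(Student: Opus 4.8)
This statement is the $I_f$-based counterpart of the classical fact that a compiler built from an MT-authenticator is itself an authenticator, so the plan is to mimic that simulation argument while keeping track of the two features specific to our UM: authentication is now carried by the final verification phase $I_f$ (equivalently, by a successful match of the entropy-session digests), and corruption of a party $P$ means exactly the ability to override $I_f$ as $P$. Fix an arbitrary message-driven protocol $\pi$ and set $\pi' = C_\lambda(\pi)$; by the definition of an authenticator it suffices to show that $\pi'$ emulates $\pi$ over unauthenticated networks, i.e. that every UM-adversary $U$ against $\pi'$ admits an AM-adversary $A$ against $\pi$ with computationally indistinguishable global output.

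The construction of $A$ is the natural one. $A$ runs $U$ as a black box and maintains internally the state of all the $\lambda$-sub-sessions that $\pi'$ spawns, while itself playing the AM against the parties running $\pi$. When an AM-party $P_i$ running $\pi$ emits a message $m$ destined for $P_j$, $A$ hands the external request $(P_j,m)$ to its simulated $\lambda$ at $P_i$ and relays the resulting $\lambda$-flows to $U$; when $U$ delivers a $\lambda$-flow, $A$ processes it in the simulation; and when a simulated $\lambda$-session at $P_j$ outputs ``$P_j$ received $m'$ from $P_i$'' --- which, per the compiler's definition, is precisely the event that $P_j$'s $I_f$-verification of the entropy value against $P_i$'s succeeds --- $A$ causes $P_j$ to receive $m'$ from $P_i$ in the AM. A corruption query against $P$ is answered by corrupting $P$ in the AM (which permits $A$ to impersonate $P$) and, inside the simulated $\lambda$, by granting $U$ the ability to have $\lambda$ accept forged messages ``from $P$'', matching the semantics of ``override $I_f$ as $P$''.

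The single point where this does not work verbatim --- and the main obstacle --- is legality of $A$ as an AM-adversary: in the AM, $A$ may deliver to $P_j$ only a message that $P_i$ genuinely sent (unless $P_i$ is corrupted), so we must show that, except with negligible probability, no simulated $\lambda$-session ever outputs ``$P_j$ received $m'$ from $P_i$'' for an $m'$ that $P_i$ never sent to $P_j$. This is exactly what ``$\lambda$ emulates the MT protocol over unauthenticated networks'' gives us: the $\lambda$-traffic, deliveries and corruptions that $U$ produces constitute a UM-adversary against the MT protocol that $\lambda$ realizes, whose global output must be indistinguishable from that of some AM-adversary against MT; since in the AM the MT protocol never outputs ``received $m'$ from $P_i$'' unless $P_i$ actually sent $m'$ (or was corrupted), the bad event has negligible probability, and a distinguisher detecting it would contradict the MT-emulation property of $\lambda$. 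I would make this precise by invoking that emulation as a black box --- with, if several $\lambda$-instances are present, a routine hybrid over them --- being careful to have $\lambda$ invoked as a subprotocol carrying its own session identifiers, with the entropy digest playing the role of the tag that $I_f$ authenticates.

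Conditioned on this good event, $A$'s simulation is perfect: every AM-delivery it performs mirrors a real send (or a send by a corrupted party), the view handed to $U$ --- all $\lambda$-flows, the $\pi$-messages embedded in them, and the recorded corruption events --- is distributed exactly as in a genuine UM execution of $\pi'$, and the $\pi$-parties' local outputs under $A$ coincide with the $\pi$-layer outputs of the $\pi'$-parties under $U$. Hence the global output of $\pi$ under $A$, with $A$'s internal $\lambda$-bookkeeping appended to the adversarial part, is statistically close to --- and, once the $\lambda$-emulation error is absorbed, computationally indistinguishable from --- the global output of $\pi'$ under $U$. Since $\pi$ was arbitrary, $C_\lambda$ is an authenticator.
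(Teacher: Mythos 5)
Your proposal is correct and follows essentially the same route as the paper: the paper's proof defers to the classical simulation argument of Bellare--Canetti--Krawczyk, noting only that the delicate step --- that, except with negligible probability, no output ``$P_j$ received $m$ from $P_i$'' occurs for an unsent $m$ between uncorrupted parties --- survives in the new UM because acceptance is tied to a successful $I_f$ validation and $\lambda$ is assumed to emulate MT over unauthenticated networks, which is exactly the legality argument you make explicit (together with the corruption-as-override-of-$I_f$ mapping). Your write-up simply spells out the simulation that the paper cites by reference.
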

\begin{proof}
The proof laid out in \cite{cryptoeprint:1998/009} applies, despite the construction differences with the original UM. That is because despite variations, we are still provided with $\lambda$ being a MT-authenticator. In our model, this will intuitively mean that if an attacker is able to modify (or even fully forge) a message transmission, validation through $I_f$ wont be successful, or equivalently, if the output '$P_i$ received \textit{m} from $P_j$' is generated from $\lambda$, this will necessarily mean that $I_f$ will produce an affirmative validation (and therefore will not happen that the triple ($P_j$,$P_i$,\textit{m}) is not currently in the set $M$ of undelivered messages in the AM, and $P_j$ is uncorrupted).

We cover the most delicate part of the proof: we would need to prove that it is a legitimate behavior to assume that, except with negligible probability, if the activation of the MT-authenticator $\lambda$ generates an output '\textit{party $P_j^{\prime}$ has received message m from $P_i^{\prime}$}', then the element $(P_i, P_j, m)$ is still in the set of undelivered messages, in the AM, when both parties are uncorrupted. And, this is still true as, in our model, the successful reception of a message in the UM involves the affirmative validation of $I_f$, for the uncorrupted sender $P_i^{\prime}$ and, since by hypothesis we are working with a protocol $\lambda$ that emulates MT over unauthenticated network, this situation can only arise with negligible probability, for non-corrupted parties. We refer to \cite{cryptoeprint:1998/009} for the detailed proof.
\end{proof}

The above results shows that, to send a message $m$ under the UM, it is secure to do so under the protocol generated by the compiler $C_{\lambda}$ defined by an MT-authenticator $\lambda$. 

This can be generalized to send a sequence of messages, each under an MT-authenticator $\lambda_i$:
\begin{theorem} \cite{10.1007/978-3-031-35486-1_24} \label{gen_MT_compiler}
Let $\Lambda = (\lambda_1, ... , \lambda_t)$ be a sequence of $t$ MT-authenticators so that each $\lambda_k$ $\epsilon$-emulates MT. Then the compiler, $C_{\Lambda}$, will be an authenticator such that for any protocol $\Pi$ in the AM, $C_{\Lambda}(\Pi)$ ($t \cdot \epsilon$)-emulates $\Pi$ in the UM.
\end{theorem}
\begin{proof}
The proof of \cite{10.1007/978-3-031-35486-1_24} can be applied here as well: Let $\Pi$ be an AM protocol. Let $\mathcal{U}$ be a UM-adversary interacting with $C_{\Lambda}(\Pi)$. The adversary $\adv$
runs $\mathcal{U}$ on a simulated interaction. Action requests from $\mathcal{U}$ to parties in the UM can be
mimicked by $\adv$ in the AM and $\adv$ relays its results back to $\mathcal{U}$. The only problem with the
simulation could occur in the well-known case that $\mathcal{U}$ specifies that \textit{"a message is received by some party $P_j$ from some party $P_i$ in the UM, but that message is not in the set of messages waiting for delivery in the AM"}. But this can happen with probability bounded by $\epsilon$.

Such an event could occur for any of the t messages and so the probability that the simulation is correct is at least $(1 - \epsilon)^t \geq 1 - t\cdot \epsilon$. 
Finally, any observer will be able to
distinguish between the run of $\Pi$ in the AM and $C_{\Lambda}(\Pi)$ in the UM with advantage at most $\epsilon' = t \cdot \epsilon$.
\end{proof}

We define now the MT-authenticator that will serve as the foundation of the secure UM protocols defined in the following sections:
\begin{pchstack}[boxed, center, space=1em]
    \procedure{Commitment-based MT Authenticator}{%
     \textbf{Alice} \> \> \textbf{Bob} \\
     (c, d) \sample \texttt{Com(m)} \> \> \\
     \> \sendmessageright{top=\text{c}} \> \\
     \> \> N \sample \{0,1\}^n \\ 
     \> \sendmessageleft{top=\text{N}} \> \\ 
     E_A = \texttt{G}(B, c, N, m) \> \> \\
     \> \sendmessageright{top=\text{d}} \> \\
     \> \> m^{\prime} = Open(c, d) \\
     \> \> E_A^{\prime} = \texttt{G}(B, c, N, m^{\prime}) \\
	 \> \> \texttt{If $I_f(E_A, E_A^{\prime})$ = 1} \\
	 \> \> \texttt{"$B$ received $m$ from $A$"}
     }
\end{pchstack}
The idea behind the above authenticator is the following:
\begin{itemize}
    \item Alice, who wishes to send a message \textit{m} to Bob, first generates a commitment \textit{c} of such message, and sends it.
    \item Bob, upon reception of such message, just generates a random share \textit{N} and transmits it to Alice. 
    \item When Alice receives this random share \textit{N}, liberates the opening \textit{d} associated with the previous commitment, so Bob can access the message \textit{m}.
    \item Both parties generate a summary of the exchange, by deriving an AV from the commitment \textit{c}, the random share \textit{N}, the actual message \textit{m} and the identity of the intended recipient $B$, which is added to ensure that an attacker cannot redirect legitimate messages to other uncorrupted not-intended destinations. Once the exchange is finished, the responder will validate its AV against the one obtained by Alice, in a secure and authenticated way, as modeled by $I_f$.
\end{itemize}
The security of the above protocol resides in the security properties of commitment schemes and the elements that conform the AV.
We formalize this idea, by proving that the Commitment-based protocol described above emulates MT in unauthenticated networks:

\begin{proposition} \label{CS_MT_security}
The Commitment-based MT-authenticator, when instantiated with a secure Robust Commitment Scheme CS and a random oracle O, emulates MT over unauthenticated channels with advantage 
\begin{multline*}
\epsilon \leq l \cdot (\frac{3}{2^{n_{av}}} + \frac{q_O}{2^{n_{av}}} \cdot ((1 + \advantage{\texttt{Hiding}}{\texttt{CS}} \cdot \advantage{\texttt{CR}}{\texttt{CS}}) \cdot \advantage{\texttt{Hiding}}{\texttt{CS}} \\ 
+ 2 \cdot \advantage{\texttt{Binding}}{\texttt{CS}}) +\advantage{\texttt{Strong-Binding}}{\texttt{CS}})
\end{multline*} 
where $l = n_p^2 \cdot n_m$, $n_p$ is the number of parties running the protocol, $n_m$ the maximum number of message sent by each party, $q_O$ is the maximum number of queries that an adversary issues to $O$ and $n_{av}$ the length of the random oracle $O$.
\end{proposition}
\begin{proof}
We follow the same demonstration ideas of \cite{cryptoeprint:1998/009,10.1007/3-540-44987-6_28,10.1007/978-3-031-35486-1_24}. An AM-adversary can perfectly simulate an UM-adversary unless the event \textit{'In the UM there is the output “Q received m from P” for some parties P and Q, but there was no previous output “P sent m to Q”, for uncorrupted parties P, Q'} happens. We first bound the probability of this event happening on any particular $(P, Q, m)$, and then limit the probability of distinguishing between adversaries by the probability of the above event happening on a particular exchange times the maximum number of exchanges generated.
    
Under the newly defined UM, if the output \textit{"Q received m from P"} happens, in particular it means that $Q$ has trustfully validated $P$ as the party at the other end of the communication. This means that we can establish $P$ as the sender, and $Q$ as the final receiver. It remains to see that, except with negligible probability, it cannot happen that $P$ intended to contact another party $Q^* \neq Q$ or send another message $m^* \neq m$.
    
Therefore, the probability of this event happening on any particular $(P, Q, m)$ is reduced to the probability of an UM adversary to generate different elements, or a different receiver, to those intended in a way that the later verification phase $I_f$ between $P$ and $Q$ is successful. We model that in the game G$_{AV}$ and the advantage against the scheme as
\begin{align*}
    \advantage{}{\texttt{CS-MT}} := \abs{\prob{\texttt{G$_{AV}$} \Rightarrow 1}}
\end{align*}
\begin{pchstack}[ boxed , center, space=1em]
  {\procedure[linenumbering]{Game G$_{AV}$}{\phantomsection\label{CSMT}
      (Q, m) \sample \mathcal{P} \times \mathcal{M} \\
      (c, d) \sample \texttt{Com($m$)}  \\
      (Q^*, c^*, d^*) \sample \adv_1^{O}(c, Q) \\
      N \sample \{0,1\}^n \\
      N^* \sample \adv_2^{O}(c, c^*, d^*, Q, Q^*, N) \\
      v_1 = O(Q, c, N^*, m) \\
      d^{**} \sample \adv_3^{O}(c, c^*, d^*, Q, Q^*, N, N^*, d) \\
      m^{**} = Open(c^*, d^{**}) \\
      v_2 = O(Q^*, c^*, N, m^{**}) \\
      \pcreturn (v_1 = v_2) \wedge \bigwedge (v_i \neq \bot) \wedge (c^*, Q^*, N^*, d^{**}) \neq (c, Q, N, d)
    }}
\end{pchstack}
where $\mathcal{P}, \mathcal{M}$ represent the space of valid \textit{parties} and \textit{messages}, respectively, and $O(P, x_1, x_2, x_3)$ represents the AV calculation that will undergo each party and $P$ is the identity of the intended receiver. Note that this value is added to bind the original receiver intended with the final receiver, and ensures that an adversary cannot simply redirect the original messages to another party. 

It is important to note that, due to Definition \ref{AV_G}, if $(x_1, \cdots, x_n) \neq (x_1^*, \cdots, x_n^*)$, then $TLV(x_1) \conc \cdots \conc TLV(x_n) \neq TLV(x_1^*) \conc \cdots \conc TLV(x_n^*)$. Therefore, for two distinct inputs to generate an output match, a random collision at the oracle $O$ must occur.

We define the intermediate games $G_0$ and $G_1$, where intuitively each game represents an additional layer where the attacker behaves like the AM model: in game $G_0$, the attacker does not modify the flow of the last exchange and in game $G_1$ the attacker only modifies the first exchange.

\begin{pchstack}[ boxed , center, space=1em]
  {\procedure[linenumbering]{Game G$_0$}{\phantomsection\label{G00}
      (Q, m) \sample \mathcal{P} \times \mathcal{M} \\
      (c, d) \sample \texttt{Com($m$)}  \\
      (Q^*, c^*, d^*) \sample \adv_1^{O}(c, Q) \\
      N \sample \{0,1\}^n \\
      N^* \sample \adv_2^{O}(c, c^*, d^*, Q, Q^*, N) \\
      v_1 = O(Q, c, N^*, m) \\
      m^{**} = Open(c^*, d) \\
      v_2 = O(Q^*, c^*, N, m^{**}) \\
      b_1 = (v_1 == v_2) \\
      b_2 = \bigwedge (v_i \neq \bot) \\
      b_3 = (c^*, Q^*, N^*) \neq (c, Q, N) \\
      \pcreturn b_1 \wedge b_2 \wedge b_3
    }}
    
  {\procedure[linenumbering]{Game G$_1$}{\phantomsection\label{G10}
      (Q, m) \sample \mathcal{P} \times \mathcal{M} \\
      (c, d) \sample \texttt{Com($m$)}  \\
      (Q^*, c^*, d^*) \sample \adv_1^{O}(c, Q) \\
      N \sample \{0,1\}^n \\
      v_1 = O(Q, c, N, m) \\
      m^{**} = Open(c^*, d) \\
      v_2 = O(Q^*, c^*, N, m^{**}) \\
      b_1 = (v_1 == v_2) \\
      b_2 = \bigwedge (v_i \neq \bot) \\
      b_3 = (c^*, Q^*) \neq (c, Q) \\
      \pcreturn b_1 \wedge b_2 \wedge b_3
    }}
\end{pchstack}

Therefore, the value $\abs{\prob{\texttt{G$_{AV}$} \Rightarrow 1} - \prob{\texttt{G$_0$} \Rightarrow 1}}$ represents the advantage of the attacker in generating a collision between the oracle output from both parties when being able to modify the opening value received by the responder.

In order to bound this difference, let us define $\mathcal{F}$ as the event that, in Game G$_{AV}$, an attacker has the ability to exploit the additional step $\adv_3$ to win the game. Formally, this means the event the attacker is able to generate a value $d^{**}$ $\neq d$ in step $\adv_3$ such that $Open(c^*, d^{**}) = m^{**}$, satisfying $v_1 = v_2 \neq \bot$. 

Note that, by definition of the games under consideration, since the event $\mathcal{F}$ is the only difference between the games in consideration (i.e. if an attacker is not able to exploit $\adv_3$ to win the game $G_{AV}$, they are equivalent), the application of the difference lemma \cite{cryptoeprint:2004/332} yields
$$\abs{\prob{\texttt{G$_{AV}$} \Rightarrow 1} - \prob{\texttt{G$_0$} \Rightarrow 1}} \leq \prob{\mathcal{F}}$$
Now, it is required to bound the probability of the event $\mathcal{F}$. To do so, two events are considered: \texttt{F$_1$}, in which $(c^*, Q^*, N^*) \neq (c, Q, N)$, and \texttt{F$_2$}, in which $(c^*, Q^*, N^*) = (c, Q, N)$.

Under \texttt{F$_1$}, $(c^*, Q^*, N, m^{**}) \neq (c, Q, N^*, m)$ and therefore the only possible way to achieve $v_1 = v_2$ is via an oracle collision. Therefore, an adversary needs to find $d^{**} \neq d$ such that $Open(c^*, d^{**}) = m^{**}$ and $O(c^*, Q^*, N, m^{**}) = O(c, Q, N^*, m)$.

For this analysis, two more cases are defined: given $m^* = Open(c^*, d^*)$, define \texttt{F$_{1, 1}$} as the event that $O(c^*, Q^*, N, m^*) = O(c, Q, N^*, m)$, and \texttt{F$_{1, 2}$} as the event that $O(c^*, Q^*, N, m^*) \neq O(c, Q, N^*, m)$.

If event \texttt{F$_{1, 1}$} happens, the adversary has already won the game, as a random collision has happened. Therefore, $\prob{\mathcal{F} \cond \texttt{F$_{1, 1}$}, \texttt{F$_{1}$}} = 1$ and $\prob{\texttt{F$_{1, 1}$} \cond \texttt{F$_{1}$}} = \frac{1}{2^{n_{av}}}$.

If \texttt{F$_{1, 2}$} holds, the adversary is required to find $d^{**}$ such that $Open(c^*, d^{**}) = m^{**} \neq m^*$ satisfying $O(c^*, Q^*, N, m^{**}) = O(c, Q, N^*, m)$. Note that the requirement $m^{**} \neq m^*$ is precisely because $\texttt{F$_{1, 2}$}$ ensures that, if $m^{**} = m^*$, then $O(c^*, Q^*, N, m^{**}) = O(c^*, Q^*, N, m^*) \neq O(c, Q, N^*, m)$. 

We claim that the probability $\prob{\mathcal{F} \cond \texttt{F$_{1, 2}$}, \texttt{F$_{1}$}}$ is bounded by the \textit{Combined Collision} advantage: first note that the probability of an adversary to find a $d^{**}$ such that $Open(c^*, d^{**}) = m^{**} \neq m^*$ in the above game is modeled exactly by the following: 
\begin{pchstack}[boxed, center]
  {\procedure[linenumbering]{Game G$_{Binding^*}$}{\phantomsection\label{Gbinding'}
      (c^*, d^*) \sample \adv_0() \\
      d^{**} \sample \adv_1 (c^*, d^*)  \\
      m^* = Open(c^*, d^*)  \\
      m^{**} = Open(c^*, d^{**})  \\
      \pcreturn (m^* \neq m^{**}) \wedge (m^*, m^{**} \neq \bot)
    }}
\end{pchstack}
as the pair $(c^*, d^*)$ was also generated by the adversary at a previous stage. Note that
$$\prob{\texttt{G$_{Binding^*}$} \Rightarrow 1} \leq \prob{\texttt{G$_{Binding}$} \Rightarrow 1}$$ 
(i.e. the game $G_{Binding^*}$ is bounded by the game $G_{Binding}$), as in the later, the adversary is the one to draw the tuple $(c^*, d^*, d^{**})$ at the same time, as opposed to receiving $(c^*, d^*)$ and just generating $d^{**}$. Note that, in this case, the games would actually be equivalent, as the two generation phases correspond to the adversary and not additional information is needed between phases.

Now, we define an adversary $\mathcal{Y}_1$ that returns $m^{**}$ if it can find $d^{**}$ such that $Open(c^*, d^{**}) = m^{**} \neq m^*$, which by the above analysis, has probability of success bounded by $\advantage{\texttt{Binding}}{\texttt{CS}}$. Then, $\prob{\mathcal{F} \cond \texttt{F$_{1, 2}$}, \texttt{F$_{1}$}}$ is exactly $\advantage{\texttt{combined}}{\texttt{CHF}}[(\mathcal{A}, n_{av}, \mathcal{Y}_1)]$ and thus by \ref{combined_adv} $$\prob{\mathcal{F} \cond \texttt{F$_{1, 2}$}, \texttt{F$_{1}$}} \leq q_O \frac{\advantage{\texttt{Binding}}{\texttt{CS}}}{2^{n_{av}}}$$

Under \texttt{F$_2$}, $(c^*, Q^*, N^*) = (c, Q, N)$ and therefore there are only two possible ways to win the game \texttt{F}. The first one, referred to as \texttt{F$_{2, 1}$}, is to generate a collision on the inputs to the oracle function. This means, to find $d^{**} \neq d$ such that $Open(c^*, d^{**}) = Open(c, d^{**}) := m^{**} = m$. 

The probability of such a collision is modeled exactly by the following game
\begin{pchstack}[boxed, center]
  {\procedure[linenumbering]{Game G$_{Strong-Binding^*}$}{\phantomsection\label{GStr_binding'}
  	 m \sample \mathcal{M} \\
      (c, d) \sample \texttt{Com}(m) \\
      d^{**} \sample \adv (c, d)  \\
      m^{**} = Open(c, d^{**})  \\
      \pcreturn (m = m^{**}) \wedge (d^{**} \neq d) \wedge (m \neq \bot)
    }}
\end{pchstack}
and $$\prob{\texttt{G$_{Strong-Binding^*}$} \Rightarrow 1} \leq \prob{\texttt{G$_{Strong-Binding}$} \Rightarrow 1}$$ 
(i.e. the game $G_{Strong-Binding^*}$ is trivially harder that the game $G_{Strong-Binding}$), as in the later, the adversary is the one to draw the tuple $(c, d, d^{**})$, as opposed to receiving $(c, d)$ and just generating $d^{**}$.

Therefore, $$\prob{\texttt{F$_{2, 1}$} \cond \texttt{F$_2$}} = \prob{\texttt{G$_{Strong-Binding^*}$} \Rightarrow 1} \leq \advantage{\texttt{Strong-Binding}}{\texttt{CS}}$$

The second one, referred to as \texttt{F$_{2, 2}$}, is to generate a collision on the outputs, provided that the inputs are not equal. This means, to find $d^{**} \neq d$ such that $Open(c^*, d^{**}) = m^{**} \neq m$ and $O(c^*, Q^*, N, m^{**}) = O(c, Q, N^*, m)$.

In a similar manner to above, we claim that the probability $\prob{ \texttt{F$_{2, 2}$} \cond \texttt{F$_{2}$}}$ is bounded by the \textit{Combined Collision} advantage: first note that the probability of an adversary to find a $d^{**} \neq d$ such that $Open(c, d^{**}) = m^{**} \neq m$ is modeled exactly by the Game G$_{Binding^*}$.

Now, we define an adversary $\mathcal{Y}_2$ that returns $m^{**}$ if it can find $d^{**} \neq d$ with the above conditions. Then, $\prob{\texttt{F$_{2, 2}$} \cond \texttt{F$_{2}$}}$ is exactly $\advantage{\texttt{combined}}{\texttt{CHF}}[(\mathcal{A}, n_{av}, \mathcal{Y}_2)]$ and thus by \ref{combined_adv} 
$$\prob{ \texttt{F$_{2, 2}$} \cond \texttt{F$_{2}$}} \leq \prob{\texttt{G$_{Binding^*}$} \Rightarrow 1} \cdot \frac{q_O}{2^{n_{av}}} \leq \advantage{\texttt{Binding}}{\texttt{CS}} \cdot \frac{q_O}{2^{n_{av}}}$$

Together, we have 
\begin{align*}
& \prob{\mathcal{F}} = \prob{\mathcal{F} \cond \texttt{F$_1$}} \cdot \prob{\texttt{F$_1$}} + \prob{\mathcal{F} \cond \texttt{F$_2$}} \cdot \prob{\texttt{F$_2$}} \leq \\
& \prob{\mathcal{F} \cond \texttt{F$_{1, 1}$}, \texttt{F$_1$}} \cdot \prob{\texttt{F$_{1, 1}$} \cond \texttt{F$_1$}} + \prob{\mathcal{F} \cond \texttt{F$_{1, 2}$}, \texttt{F$_1$}} \cdot \prob{\texttt{F$_{1, 2}$} \cond \texttt{F$_1$}} + \prob{\mathcal{F} \cond \texttt{F$_2$}} \leq \\
& \frac{1}{2^{n_{av}}}\cdot (1 + q_O \cdot \advantage{\texttt{Binding}}{\texttt{CS}}) + \prob{\texttt{F$_{2, 1}$} \cond \texttt{F$_2$}} + \prob{\texttt{F$_{2, 2}$} \cond \texttt{F$_2$}} \leq \\
& \frac{1}{2^{n_{av}}}\cdot (1 + 2 \cdot q_O \cdot \advantage{\texttt{Binding}}{\texttt{CS}}) +\advantage{\texttt{Strong-Binding}}{\texttt{CS}}
\end{align*}

The value $\abs{\prob{\texttt{G$_0$} \Rightarrow 1} - \prob{\texttt{G$_1$} \Rightarrow 1}}$ represents the advantage of the attacker in generating a collision between the oracle output from both parties when being able to modify the random challenge value sent by the responder.

In order to bound this difference, let us define $\mathcal{H}$ as the event that in game \texttt{G$_0$} an adversary has the ability to exploit the second exchange to win the game, i.e., to generate $N^* \neq N$ in step $\adv_2$ such that $v_1 = v_2 \neq \bot$. Note that, by definition of the games under consideration, the application of the difference lemma \cite{cryptoeprint:2004/332} yields
$$\abs{\prob{\texttt{G$_{0}$} \Rightarrow 1} - \prob{\texttt{G$_1$} \Rightarrow 1}} \leq \prob{\mathcal{H}}$$ 
Now, it is required to bound the probability of the above event. There are exactly two ways: via a random generation of $N^*$ and collision between outputs (referred to as $H_1$), and by trying to extract information and generating a targeted value for $N^*$ (i.e. not trying random collision), referred to as $H_2$.

The advantage in winning the game from the first option is bounded by the probability of a random collision, which is $\frac{1}{2^{n_{av}}}$. Therefore, we will focus on the probabilities for an adversary in making an elaborated guess.

The analysis is done in terms of the event $D$ that the adversary knows the value of $m$. First, we will analyze the probability of $D$ itself. Since $m = Open(c, d)$, the pair $(c, d)$ represents a valid opening of $m$. Therefore, this probability is bounded by the advantage in retrieving the committed input from a valid commitment. This is modeled by the following game:
\begin{pchstack}[boxed, center]
  {\procedure[linenumbering]{Game G$_{Hiding^*}$}{\phantomsection\label{G_Hiding'}
      m \sample \mathcal{M} \\
      (c, d) \sample Com(m) \\
      m' \sample \adv(c) \\
      \pcreturn (m = m')
    }}
\end{pchstack}

And, in turn, this advantage is bounded by the ability to distinguish between two committed inputs from a commitment value, in a traditional $IND \implies OW$ argument. The latter represents the Hiding advantage of a commitment scheme, and therefore it holds that $\prob{D} \leq \advantage{\texttt{Hiding}}{\texttt{CS}}$.

In case $D$ does not hold, the adversary does not possess all the information required to make an informed guess: the value $v_1$ is dependent on both the choice of $N^*$ of the adversary and $m$, which is defined but not known to the adversary. Therefore, $\prob{\texttt{H$_2$} \cond \neg D} = 0$, as the only possible way when not all information is known is random collision.


In case $D$ does hold, we analyze the success probability in terms of the event $C$ that $c^* = c$. If $C$ is also true, then $Open(c^*, d) = m^{**} = m = Open(c, d)$. Therefore, since $m$ is known, $m^{**}$ is also known. This provides the adversary with the knowledge of all elements involved in the generation of both $v_1$ and $v_2$. 

Thus the success probability $\prob{\texttt{H$_2$} \cond C,D}$ resides in the adversaries ability to generate a random element $N^* \neq N$ such that $O(Q^*,c, N, m) = O(Q, c, N^*, m)$. Since the attacker cannot win the game via input collision, as it must hold that $N^* \neq N$, the only viable option is to search for a preimage-collision over random elements $N^* \in \{0,1\}^n$, which has probability at most $\frac{q_O}{2^{n_{av}}}$, with $q_O$ being the maximum number of queries made to the oracle $O$.

In case $D$ holds but $C$ does not, further information cannot be claimed about $m^{**}$. For any adversary to be able to win the game without random collision, three conditions must happen simultaneously: first, $m^{**} \neq \bot$. Second, to be able to retrieve $m^{**}$ from $c^*$. Lastly, the ability to force a collision via generating $N^*$.

For the analysis of first condition, since $m^{**} = Open(c^*, d)$ and $c^* \neq c$, it is required to bound the probability that $m^{**} \neq \bot$. It can be bounded by the advantage of the CR property of commitment schemes. Note, however, that this bound is very conservative, since the CR-CS game allows the adversary to find $(c, c', d)$, while in here they are all established.

The second condition requires to bound the probability of retrieving its value. Since $m^{**} = Open(c^*, d) \neq \bot$, the pair $(c^*, d)$ form a valid commitment pair of the value $m^{**}$. Therefore, this probability is bounded by the advantage in retrieving the committed input from a valid commitment. As above, this can be bounded by the Hiding property of the CS.

The third condition requires to bound the probability of being able search for a preimage-collision over random elements $N^* \in \{0,1\}^n$. It has probability at most $\frac{q_O}{2^{n_{av}}}$, with $q_O$ being the maximum number of queries made to the oracle $O$.

Together, this means that $$\prob{\texttt{H$_2$} \cond \neg C,D} \leq \advantage{\texttt{Hiding}}{\texttt{CS}} \cdot \advantage{\texttt{CR}}{\texttt{CS}} \cdot \frac{q_O}{2^{n_{av}}}$$
and therefore
\begin{align*}
& \prob{\mathcal{H}} \leq \prob{\texttt{H$_1$}} + \prob{\texttt{H$_2$}} \leq \\
& \frac{1}{2^{n_{av}}} + \prob{\texttt{H$_2$} \cond \neg D} \cdot \prob{\neg D} + \prob{\texttt{H$_2$} \cond D} \cdot \prob{D} = \\
& \frac{1}{2^{n_{av}}} + (\prob{\texttt{H$_2$} \cond C, D} \cdot \prob{C \cond D} + \prob{\texttt{H$_2$} \cond \neg C, D} \cdot \prob{\neg C \cond D}) \cdot \prob{D} \leq \\
& \frac{1}{2^{n_{av}}} + (1 + \advantage{\texttt{Hiding}}{\texttt{CS}} \cdot \advantage{\texttt{CR}}{\texttt{CS}}) \cdot \frac{q_O}{2^{n_{av}}} \cdot \advantage{\texttt{Hiding}}{\texttt{CS}}
\end{align*}

Lastly, the advantage of the game $G_{1}$ is exactly the probability of a random collision based on the modifications $(Q^*, c^*, d^*)$ done by the adversary. 

It is straightforward to see that, regardless of the information that can be extracted from $c$ or generated from it, a random collision on the oracle is still required, since $N$ has not yet been generated and $(c^*, Q^*) \neq (c, Q)$. The probability of this collision is exactly $\frac{1}{2^{n_{av}}}$.

Together, we have the following:
\begin{align*}
    & \abs{\prob{\texttt{G$_{AV}$} \Rightarrow 1}} \leq \\ 
    & \abs{\prob{\texttt{G$_{AV}$} \Rightarrow 1} - \prob{\texttt{G$_0$} \Rightarrow 1}} + \abs{\prob{\texttt{G$_{0}$} \Rightarrow 1} - \prob{\texttt{G$_1$} \Rightarrow 1}} + \abs{\prob{\texttt{G$_1$} \Rightarrow 1}} \leq \\
    & \frac{3}{2^{n_{av}}} + \frac{q_O}{2^{n_{av}}} \cdot ((1 + \advantage{\texttt{Hiding}}{\texttt{CS}} \cdot \advantage{\texttt{CR}}{\texttt{CS}}) \cdot \advantage{\texttt{Hiding}}{\texttt{CS}} \\
    & + 2 \cdot \advantage{\texttt{Binding}}{\texttt{CS}}) +\advantage{\texttt{Strong-Binding}}{\texttt{CS}}
\end{align*}
    Then, considering all possible messages between parties, which equals the maximum number of messages sent by each party times the square of the number of parties, we arrive to the desired bound.
\end{proof}

\subsubsection{SK security under the new model}
Under our model, we adopt the definition and the setting for key exchange protocols and \textit{session-key security} introduced in \cite{10.1007/3-540-44987-6_28} and detailed in Section \ref{or_sk_security}, but introducing the authentication differences pertaining to our model.

Since the key exchange procedures are a special case of message-driven protocols, the notion of AV and posterior verification are translated onto these protocols, under the UM. An AV (pertaining to specific elements, depending upon the protocol in consideration) is generated by each party running the key exchange protocol. 

As part of the protocol's message exchange, the protocol will trigger each party to verify its AV(s) against the one(s) generated by the opposing party, in a secure and authenticated way, as modeled by $I_f$. This will be represented in the definition of the \textit{Send} query of an adversary against the \textit{Key-Ind} game.

Note that the presence of more than one AV will depend on the particularities of how each protocol is constructed. For example, if the protocol in the UM is built from repeated applications of the CS-based compiler above, then one AV will appear for each message to be exchanged under the protocol in the AM. This means that both the initiator and the responder can trigger the verification function $I_f$ to perform an AV verification, under a single protocol run.

If instead the key exchange is constructed directly, the number of AV values to be verified will depend on the key exchange's specifications.

The definition of SK security employed is as introduced in Section \ref{or_sk_security}, with three modifications: the definition of \textit{completeness} of a session in the UM, the definition of the \textit{Send} query, and the definition of the \textit{Corrupt} ability of the adversary.

The session will be defined to be \textit{completed} if $I_f$ verification of the AV elements generated in the protocol session is affirmative and the key value $\kappa$ is not null (that is, a previous abort did not happen). When a session is completed, the secret value $\kappa$ is appended as local output to both parties. If this verification is not affirmative, we consider the session aborted and $\kappa$ is set to null, even in the case when a non-null secret value of the key was reached at the end of the protocol's message exchange.

The definition of the \textit{Send} query needs to be expanded to include the capacity that a party, upon reception of certain messages in a protocol run, triggers the initiation of the verification phase $I_f$, if no prior rejection happens. This will usually happen at the end of the protocol run by the responder, but could also happen at other stages if more that one AV value needs to be verified within a single session.

The definition of the \textit{Corrupt} capability for the new model is as introduced above. I.e., is defined as the capacity to override the authenticated validation performed by $I_f$ as if it were the corrupted party.

With these modifications, we state the SK security under MT-authenticators theorem, which still holds true under the commitment-based model:
\begin{theorem}
\label{ThCompilerSK}
Let $\pi$ be an $\epsilon$-SK-secure protocol in the AM and let $C_\Lambda$ be a compiler based on a sequence of MT-authenticators $\Lambda$. Consider $\pi' = C_\Lambda(\pi)$ the protocol that $\alpha$-emulates $\pi$ in our modified UM model. Then, $\pi'$ is an $\epsilon'$-SK-secure protocol in our modified UM model, with $\epsilon' = \epsilon + \alpha$.
\end{theorem}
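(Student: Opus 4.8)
The plan is to mirror the classical Canetti--Krawczyk argument (\cite{10.1007/3-540-44987-6_28,cryptoeprint:1998/009}, and the KEM-adapted version of \cite{10.1007/978-3-031-35486-1_24}), invoking the authenticator theorem already established above, and then checking that the two features proper to our model---the final verification phase $I_f$ in place of the initialization function, and the corresponding redefinition of \textit{Corrupt}---do not break any step. Recall that $\epsilon$-SK-security asks for two things: (i) two honest parties completing matching sessions output the same key except with negligible probability, and (ii) no adversary has non-negligible advantage in the key-indistinguishability experiment. I would prove each in turn.

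For (i): since $C_\lambda$ is an authenticator (by the theorem above), $\pi' = C_\lambda(\pi)$ emulates $\pi$ over unauthenticated networks, so the global output of any UM-execution of $\pi'$ is indistinguishable from that of an AM-execution of $\pi$. In particular, matching completed sessions of $\pi'$ in the UM inherit the key-agreement property of matching completed sessions of $\pi$ in the AM, up to a negligible term. Here one uses that in our model a UM-session of $\pi'$ is declared \emph{completed} only after a successful $I_f$ validation, which by the MT-authenticator guarantee of the preceding proposition corresponds, except with negligible probability, to a faithful delivery of the underlying $\pi$-message, hence to a genuinely matching AM-session.

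For (ii): given a UM-adversary $U$ against $\pi'$ with advantage $\delta$ in the key-indistinguishability game, I would build an AM-adversary $A$ against $\pi$ that runs $U$ internally while playing $U$'s UM-challenger. $A$ forwards \textit{NewSession} queries to its own AM-challenger; for every message $\pi$ wishes to send it simulates $\lambda$ internally (the commitment exchange, the nonce $N$, the entropy values $E_A, E_A'$), and whenever that internal copy of $\lambda$ outputs ``$P_j$ received $m$ from $P_i$'' it issues the corresponding \textit{Send} to the AM-challenger. \textit{RevealKey}, \textit{Expire} are passed through; \textit{RevealState} is answered by adjoining the simulated $\lambda$-state to the AM state; \textit{Corrupt}$(A)$ is answered by corrupting $A$ in the AM and, in the simulation, granting $U$ the ability to override $I_f$ on $A$'s behalf, which is exactly the UM-corruption power defined in our model. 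The \textit{Test} query is relayed verbatim and its reply handed back to $U$; its restrictions (target session completed, no prior \textit{RevealKey}/\textit{RevealState}/\textit{Corrupt} on the relevant parties) transfer because the translation preserves session status and the event record. Finally $A$ outputs $U$'s guess $b'$. By the emulation property, $U$'s view inside $A$ differs from its view in a real UM-execution only on the negligible authentication-failure event bounded in the preceding proposition, so $A$'s advantage is at least $\delta$ minus a negligible quantity, and SK-security of $\pi$ in the AM forces $\delta$ to be negligible, with the same loss factor $l = n_p^2 \cdot n_m$ plus the AM-advantage of $\pi$.

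The main obstacle, and the place where the argument genuinely departs from \cite{10.1007/978-3-031-35486-1_24}, is verifying that the simulation of \textit{Corrupt} and \textit{RevealState} is faithful under the $I_f$-based authentication model: one must check that handing $U$ the ``$I_f$-override'' capability for a corrupted party is neither more nor less than what an AM-\textit{Corrupt} grants, and that the simulated session state released on \textit{RevealState} (the values $c$, $d$, $N$, $E_A$) is distributed exactly as in a real UM-run of $\pi'$, so that no leaked opening or entropy value helps $U$ distinguish the \textit{Test} key. Both points reduce to the hiding and binding properties of the commitment scheme and the entropy-collision bound of the preceding proposition; once those are in hand, the reduction goes through as stated.
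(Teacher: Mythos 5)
Your proposal is correct and follows essentially the same route as the paper: the paper's proof simply invokes the reduction of \cite{10.1007/978-3-031-35486-1_24} (UM-adversary simulated by an AM-adversary up to the \emph{forge} event, whose probability is bounded by the MT-authenticator guarantee) and checks exactly the two model-specific points you single out, namely that completeness now incorporates the $I_f$ validation and that \textit{Corrupt} is reinterpreted as the ability to override $I_f$. Your write-up merely spells out the simulation of the queries in more detail than the paper does, so there is no substantive difference to report.
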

\begin{proof}
The proof presented in \cite{10.1007/978-3-031-35486-1_24} to prove this result under the original model still applies to our model. The definitions of \cite{10.1007/978-3-031-35486-1_24} still models the notion of the key indistinguishability experiment (without any long-term key) and SK-security under our model, with three modifications: 

First, we have included the necessary AV validation process $I_f$ of our general message-driven model inside the definition of \textit{completeness} of a key exchange protocol execution in the UM (i.e., the key $\kappa$ will only be non-null if the exchange has been successful, and that includes the $I_f$ phase was successfully verified). 

Then, the \textit{Send} capability of an adversary has been expanded to include the capacity of the responder to trigger the verification phase $I_f$, when a protocol run is completed.

Lastly, the \textit{Corrupt} capability of an adversary has been re-defined to atone for identity usurpation under the new UM, i.e., corruption of the $I_f$ process, which represents the authentication phase equivalent to the presence of long-term keys in the original UM.
    
We note that these requirements are the appropriate execution over key exchange protocols, in the UM. The key event '\textit{forge}' that models the proof is the well-known event that a party $P_j$ successfully receives a message \textit{m} from another party $P_i$, but $P_i$ never sent it. And the probabilities modeled in \cite{10.1007/978-3-031-35486-1_24} remain the same, under the new UM model.
\end{proof}
\section{KA-based protocols}

Cryptographic protocols based on key exchange algorithms are vastly used in a number of different scenarios, a great deal of those through insecure channels. These protocols are defined in terms of the number of exchanged messages needed to complete them, and the information sent on each message. We provide in this section with a selected overview of how these KA-based protocols would look like, under our new models.

\subsection{A KA-based SK-secure protocol in the AM: Choice of primitive}

The minimum requirement for a KA-based shared secret establishment is 2 messages. The protocol would go as follows:

\begin{pchstack}[boxed, center, space=1em]
    \procedure{2-pass KA-based protocol on the \textit{AM} \pcbox{UM}}{%
     \textbf{Alice} \> \> \textbf{Bob} \\
     (s_{k_a}, p_{k_a}) \sample \texttt{KGen} (\secparam) \> \> \\
     \> \sendmessageright{top=\text{(p$_{k_a}$, s)}} \> \\
     \> \> (s_{k_b}, p_{k_b}) \sample \texttt{KGen} (\secparam) \\ 
     \> \> K_{ab} = \texttt{KA}(p_{k_a}, s_{k_b}) \\ 
     \> \> \pcbox{E = \texttt{G}(B, p_{k_a}, s, p_{k_b}, K_{ab})} \\ 
     \> \sendmessageleft{top=\text{(p$_{k_b}$, s)}} \> \\ 
     K_{ba} = \texttt{KA}(p_{k_b}, s_{k_a}) \> \> \\
     \pcbox{E = \texttt{G}(B, p_{k_a}, s, p_{k_b}, K_{ba})} \> \>}
\end{pchstack}

\begin{itemize}
    \item Alice generates a cryptographic key pair ($s_{k_a}$, $p_{k_a}$), by means of the KGen() function of the specific KA selected. Then, the public value $p_{k_a}$ is sent to Bob, along with a session identifier $s$.
    \item  Bob, upon reception of Alice's public key $p_{k_a}$, generates another key pair ($s_{k_b}$, $p_{k_b}$) with the same algorithm KGen(), to then execute the KeyAgreement function, with inputs its own secret key $s_{k_b}$ and the other end's public key $p_{k_a}$. The execution of this function yields a secret key $K_{ab}$. Then, Bob sends its own public key $p_{k_b}$ to Alice.
    \item With Bob's public key, Alice executes the same KeyAgreement function, on its own private key and the received public key, to derive the shared secret $K_{ba}$. If no interference has happened, the KA algorithm ensures that $K_{ab}$ = $K_{ba}$. On the UM, as part of the protocol, the AV elements and the verification phase would appear.
\end{itemize}

\subsubsection{Security \label{3.1.1}}

The security of the protocol itself, without any interference, i.e., on the authenticated model, is based upon the security of the underlying key exchange algorithm selected, and the unique possession by both ends of their respective secret keys. An example is shown on Section \ref{2pass_dh_am}.

Over insecure channels, i.e., on the unauthenticated model, neither party has any assurance that the public key values received indeed corresponds to the actual values sent by the other party. This particular situation is the very foundation of the presence of the session AV calculated by both ends on the protocol. But, even the presence of such value is not enough to avoid MitM interference. The attack against this protocol goes as follows:
\begin{pchstack}[boxed, center, space=1em]
    \procedure[colspace=-1.75cm]{MitM attack on 2-pass KA-based protocol}{%
     \textbf{Alice}  \> \> \textbf{Mallory} \> \> \textbf{Bob} \\
     (s_{k_a}, p_{k_a}) \sample \texttt{KGen} (\secparam) \> \> \> \> \\
     \> \sendmessageright{top=\text{(p$_{k_a}$, s)}} \> \> \> \\
     \> \> (s_{k_{eb}}, p_{k_{eb}}) \sample \texttt{KGen} (\secparam) \> \> \\ 
     \> \> \> \sendmessageright{top=\text{(p$_{k_{eb}}$, s)}} \> \\
     \> \> \> \> (s_{k_b}, p_{k_b}) \sample \texttt{KGen} (\secparam) \\ 
     \> \> \> \> K_{be} \sample \texttt{KA}(p_{k_{eb}}, s_{k_b}) \\
     \> \> \> \>  E_{be} = \texttt{G}(B, p_{k_{eb}},  s, p_{k_b}, K_{be}) \\
     \> \> \> \sendmessageleft{top=\text{(p$_{k_b}$, s)}} \> \\
     \> \> K_{eb} = \texttt{KA}(p_{k_b}, s_{k_{eb}}) \> \> \\
     \> \> E_{eb} = \texttt{G}(B, p_{k_{eb}}, s, p_{k_b}, K_{eb}) \> \> \\
     \> \> \texttt{While $E_{ea} \neq E_{eb}$ do:} \> \> \\
     \> \> \quad (s_{k_{ea}}, p_{k_{ea}}) \sample \texttt{KGen} (\secparam) \> \> \\
     \> \> \quad K_{ea} \sample \texttt{KA}(p_{k_a}, s_{k_{ea}}) \> \> \\
     \> \> \quad E_{ea} = \texttt{G}(B, p_{k_a}, s, p_{k_{ea}}, K_{ea}) \> \> \\
     \> \sendmessageleft{top=\text{(p$_{k_{ea}}$, s)}} \> \> \> \\
     K_{ae} = \texttt{KA}(p_{k_{ea}}, s_{k_a}) \> \> \> \> \\
     E_{ae} = \texttt{G}(B, p_{k_a}, s, p_{k_{ea}}, K_{ae}) \> \> \> \>}
\end{pchstack}

\begin{itemize}
    \item When Alice sends its public key $p_{k_{a}}$, Mallory intercepts it and instead transmits the public key $p_{k_{eb}}$ of its key pair generated. Then, Bob will execute the corresponding KeyAgreement function over the fraudulent public key $p_{k_{eb}}$, generating a secret $K_{eb}$, and its session AV, based on the secret, its own public key $p_{k_{b}}$ and the fraudulent public key value $p_{k_{eb}}$. Then, Bob transmits $p_{k_{b}}$.
    \item Mallory would then need to intercept this public value and substitute with the public key value $p_{k_{ea}}$ of a newly generated key pair ($s_{k_{ea}}$, $p_{k_{ea}}$). But this value must be such that the session AV generated by the shared secret $K_{ae}$ and the public values $p_{k_{ea}}$ and $p_{k_a}$ is the same as the session AV generated by Bob (to which Mallory, as has actively modified the natural course of the protocol, has access to). Therefore, Mallory would need to loop through key pairs ($s_{k_{ea_i}}$, $p_{k_{ea_i}}$) until one satisfies the required condition.
\end{itemize}
The actual success probability of this attack is closely tied with the size of the session AV generated, but is nevertheless a plausible loophole for an attacker to exploit.

In fact, the above diagram provides the following lemma:
\begin{proposition}
The 2-pass KA-based protocol has SK-security in the UM model, regardless of the KA scheme employed, bounded by the following value:
\begin{align*}
    \advantage{\texttt{Key-Ind}}{\pi_{UM}}[(\adv)] \geq 1 - \left(1 - \frac{1}{2^l}\right)^{\min\{q, |Y|\}}
\end{align*}
where $l$ is the output length of the AV random oracle, $q$ is the maximum number of queries the adversary is able to make against the AV random oracle and $Y: = \{(\pk, K) : (\sk, \pk) \sample \texttt{KGen}(), K \sample \texttt{KA}(\pk_p, \sk)\}$ is the domain of all possible combinations of valid public keys and shared secrets with public key $\pk_p$ of the KA scheme.
\end{proposition}
\begin{proof}
The above diagram can be replicated exactly as a routine for an adversary $\adv$ in the UM unless it is unable to generate an AV collision between $E_{ab}$ and $E_{eb}$.

Note the advantage of an adversary in generating said collision is defined exactly as $\advantage{\texttt{combined}}{\texttt{CHF}}[(\adv, l, \mathcal{Y})]$, where $\mathcal{Y}$ is an algorithm that returns values from the domain $Y$ by first generating an ephemeral key pair, and then executing the KA procedure with $\pk_p$. 

Therefore, since $Y$ does not impose any cryptographic restriction that the pairs $(\pk, K)$ must verify, we have $\advantage{\texttt{sample}}{Y}[(\mathcal{A})] = 1$. Consequently, the adversary can construct a set $Y_q$ of $\min\{q, |Y|\}$ different values $((\pk_i, K_i))$.

Since the $SK$ advantage can be lower bounded by the advantage of any plausible attack, we have that:
\begin{align*}
    \advantage{\texttt{Key-Ind}}{\pi_{UM}}[(\adv)] \geq \advantage{\texttt{combined}}{\texttt{CHF}}[(\adv, l, \mathcal{Y})] = 1 - \left(1 - \frac{1}{2^l}\right)^{\min\{q, |Y|\}}
\end{align*}
where the last equality is provided by Proposition \ref{combined_adv_with_Y}.
\end{proof}

While we do not claim that this protocol is always insecure, it is important to realize that, depending upon the capabilities of the adversary in terms of the maximum number of queries $q$, the protocol could become insecure, especially given the fact that, as noted above, the value $l = n_{av}$ will not be high, in general. 

This realization motivates the need to come up with more robust protocols in which the security is guaranteed.

\subsubsection{Practical instantiation of a KA-based exchange \label{2pass_dh_am}}
We provide a practical example of the above protocol, by instantiating the general KA scheme with a basic Diffie-Hellman procedure, over a generic finite cyclic group:

\begin{pchstack}[boxed, center, space=1em]
    \procedure{2-pass DH-based protocol in the AM}{%
     \textbf{Alice} \> \> \textbf{Bob} \\
     a \sample \mathbb{N} \> \> \\
     \> \sendmessageright*[2.5cm]{\text{($g^a$, s)}} \> \\
     \> \> b \sample \mathbb{N} \\ 
     \> \> K_{ab} = (g^a)^b = g^{ab} \\ 
     \> \sendmessageleft*[2.5cm]{\text{($g^b$, s)}} \> \\ 
     K_{ba} = (g^b)^a = g^{ba} \> \>}
\end{pchstack}

This protocol was proven secure on the AM model in \cite{cryptoeprint:1998/009,10.1007/3-540-44987-6_28}, if the DDH assumption holds for the selected group:
\begin{proposition}\cite{10.1007/3-540-44987-6_28}\label{sk_security_2_pass_dh}
    Let $\adv$ be an adversary against the SK-security of the above protocol, which interacts with at most $l$ sessions. Then the 2-pass DH protocol is $\epsilon$-SK-secure in the AM, with 
    $$ \epsilon \leq l \cdot \advantage{DDH}{DH}$$
\end{proposition}

\subsection{A KA-based SK-secure protocol in the UM: Compiler application}
The above protocol is shown to be secure, for practical instantiations, on the AM. To avoid the vulnerabilities that the protocol presented, the application of a compiler based on the MT-authenticator defined on section can be applied, yielding the following (unoptimized) protocol:

\begin{pchstack}[boxed, center, space=1em]
    \procedure{6-pass KA-based protocol (unoptimized)}{%
     \textbf{Alice} \> \> \textbf{Bob} \\
     (\pk_a, \sk_a) \sample \texttt{KGen} (\secparam) \> \> \\
     (c(\pk_a), d(\pk_a)) \sample \texttt{Com}(\pk_a) \> \> \\
     \> \sendmessageright{top=\texttt{(c($\pk_a$), s)}} \> \\
     \> \>  N_B \sample \texttt{$\bin^n$} \\
     \> \sendmessageleft{top=\texttt{$N_B$}} \> \\
     AV_A = \texttt{G}(B, s, \pk_a, c(\pk_a), N_B) \> \> \\
     \> \sendmessageright{top=\texttt{d($\pk_a$)}} \> \\
     \> \> \pk_a = Open(c(\pk_a), d(\pk_a)) \\
     \> \> AV_A = \texttt{G}(B, s, \pk_a, c(\pk_a), N_B) \> \> \pclb
     \pcintertext[dotted]{$\lambda_I - \lambda_R$ division} \\
     \> \> (\pk_b, \sk_b) \sample \texttt{KGen}(\secparam) \\ 
     \> \> (c(\pk_b), d(\pk_b)) \sample \texttt{Com}(\pk_b) \\
     \> \> K = \texttt{KA}(\sk_b, \pk_a) \\
     \> \sendmessageleft{top=\texttt{(c($\pk_b$), s)}} \> \\ 
      N_A \sample \texttt{\{0,1\}$^n$} \> \> \\
     \> \sendmessageright{top=\texttt{$N_A$}} \> \\ 
     \> \> AV_B = \texttt{G}(A, s, \pk_b, c(\pk_b), N_A) \> \> \\ 
     \> \sendmessageleft{top=\texttt{d($\pk_b$)}} \> \\ 
     \pk_b = Open(c(\pk_b), d(\pk_b)) \> \> \\
     K = \texttt{KA}(\sk_a, \pk_b) \> \> \\
     AV_B = \texttt{G}(A, s, \pk_b, c(\pk_b), N_A)}
\end{pchstack}
where each of the two separated blocks represent an almost canonical application of the CS-based MT-authenticator to one message of the 2-pass KA-based protocol.

The only conceptual difference is that the MT-authenticators provide an additional value in the first message exchanged, i.e. the session identifier, which is also appended to the AV calculation.

It is straightforward to see that the defined MT-authenticator accepts any number of inputs to the first message without modifying the security emulation bound provided in Proposition \ref{CS_MT_security}, as long those elements are included in the AV calculation and do not provide any information regarding other values employed in the MT-authenticator. In other words, as long as they are independent of both the commitment generated and the random challenge.

\subsubsection{Security}
The security of the above protocol is a direct application of propositions \ref{CS_MT_security}, \ref{gen_MT_compiler} and \ref{ThCompilerSK}, along with the security of the 2-pass KA-based protocol in the AM. Formally:

\begin{proposition}\label{gen_4_pass_KA_sk_security}
    Let the 2-pass KA-based protocol be a $\epsilon$-SK-secure protocol in the AM. Then, the 6-pass KA-based protocol is $(\epsilon + 2\cdot \alpha)$-SK secure in the UM,
    where $\alpha$ is the MT emulation margin proven in Proposition \ref{CS_MT_security}.
\end{proposition}
\begin{proof}
    The CS-based MT-authenticator, denoted as $\lambda$,  $\alpha$-emulates the MT-authenticator by Proposition \ref{CS_MT_security}.

    Then, Proposition \ref{gen_MT_compiler} proves that the sequence $\Lambda = (\lambda, \lambda)$ defines a compiler $C_{\Lambda}$ such that, for any protocol $\Pi$ in the AM, $C_{\Lambda}(\Pi)$ $2\cdot \alpha$ emulates $\Pi$ in the UM.

    Then, by Theorem $\ref{ThCompilerSK}$, $C_{\Lambda}(\Pi)$ is an $\epsilon'$-SK-secure protocol in the UM, where $\epsilon' := \epsilon + 2 \cdot\alpha$.
\end{proof}

\subsubsection{Protocol optimization}
The above protocol represents an application of the CS-based compiler to the 2-pass KA-based key exchange. Since each application of the MT-authenticator \ref{CSMT} generates three messages, this yields a protocol of $3 \cdot 2 = 6$ messages, whose security, as shown above, is a direct application of the results of the above section.

However, a simple realization shows that, since each application of the MT-authenticator is independent, some messages can be sent \textit{in parallel}, thus reducing the number of exchanged messages.

The optimized version of the above protocol is as follows:
\begin{pchstack}[boxed, center, space=1em]
    \procedure{4-pass KA-based protocol (optimized)}{%
     \textbf{Alice} \> \> \textbf{Bob} \\
     (\pk_a,\sk_a) \sample \texttt{KGen} (\secparam) \> \> \\
    (c(\pk_a), d(\pk_a)) \sample \texttt{Com}(\pk_a) \> \> \\
     \> \sendmessageright*[2.5cm]{\texttt{(c($\pk_a$), s)}} \> \\
     \> \>  N_B \sample \texttt{\{0,1\}$^n$} \\
     \> \> (\pk_b,\sk_b) \sample \texttt{KGen} (\secparam) \\
     \> \> (c(\pk_b), d(\pk_b)) \sample \texttt{Com}(\pk_b) \\
     \> \sendmessageleft*[2.5cm]{\texttt{(c($\pk_b$), s, $N_B$)}} \> \\ 
      N_A \sample \texttt{\{0,1\}$^n$} \> \> \\
     AV_A = \texttt{G}(B, s, \pk_a, c(\pk_a), N_B) \> \> \\ 
     \> \sendmessageright*[2.5cm]{\texttt{(d($\pk_a$), $N_A$)}} \> \\
     \> \> \pk_a = Open(c(\pk_a), d(\pk_a)) \\
     \> \> K_{ba} = KA(\pk_a, \sk_b)) \\
     \> \> AV_A^{\prime} = \texttt{G}(B, s, \pk_a, c(\pk_a), N_B) \> \> \\
     \> \> AV_B = \texttt{G}(A, s, \pk_b, c(\pk_b), N_A) \> \> \\ 
     \> \sendmessageleft*[2.5cm]{\texttt{d($\pk_b$)}} \> \\ 
     \pk_b = Open(c(\pk_b), d(\pk_b)) \> \> \\
     K_{ab} = KA(\sk_a, \pk_b) \> \> \\
     AV_B^{\prime} = \texttt{G}(A, s, \pk_b, c(\pk_b), N_A)}
\end{pchstack}
\begin{itemize}
    \item First, Alice generates a key pair $(\sk_a, \pk_a)$ and sends the commitment $c(\pk_a)$ of the public key value $\pk$, along with the session identifier $s$.
    \item Bob, upon reception of the commitment $c(\pk_a)$, generates its own key pair $(\sk_b, \pk_b)$. Then, generates a commitment \textit{c($\pk_b$)} of the public ket and sends it, along with a random challenge $N_B$.
    \item The inititator, upon reception of the commitment value \textit{c($\pk_b$)}, generates its own challenge $N_A$ and sends it along with the opening value \textit{d($\pk_a$)} of $pk_a$.
    \item Bob verifies the commitment received at the beginning with the opening value received in this iteration and the actual value committed and, upon successful verification, calculates the session AVs, along with the session key and sends the opening value \textit{d($\pk_b$)} of his public key $\pk_b$ committed before.
    \item Alice verifies that the value committed matches the commitment stored, and then proceeds to derive the shared secret key using the \textit{KA} function, and her corresponding session AVs.
\end{itemize}

\subsubsection{Practical instantiation \label{3.2.2}}
We give a practical example of the above 4-pass protocol by building on the example laid out in \ref{2pass_dh_am}:

\begin{pchstack}[boxed, center, space=1em]
    \procedure{4-pass KA-based protocol (optimized)}{%
     \textbf{Alice} \> \> \textbf{Bob} \\
     a \sample \mathbb{N} (\secparam) \> \> \\
     (c(g^a), d(g^a)) \sample \texttt{Com}(g^a) \> \> \\
     \> \sendmessageright*[2.5cm]{\texttt{(c($g^a$), s)}} \> \\
     \> \>  N_B \sample \texttt{\{0,1\}$^n$} \\
     \> \> b \sample \mathbb{N} \\
     \> \> (c(g^b), d(g^b)) \sample \texttt{Com}(g^b) \\
     \> \sendmessageleft*[2.5cm]{\texttt{(c($g^b$), s, $N_B$)}} \> \\ 
      N_A \sample \texttt{\{0,1\}$^n$} \> \> \\
     AV_A = \texttt{G}(B, s, g^a, c(g^a), N_B) \> \> \\ 
     \> \sendmessageright*[2.5cm]{\texttt{(d($g^a$), $N_A$)}} \> \\
     \> \> g^a = Open(c(g^a), d(g^a)) \\
     \> \> g^{ba} = (g^a)^b \\
     \> \> AV_A^{\prime} = \texttt{G}(B, s, g^a, c(g^a), N_B) \> \> \\
     \> \> AV_B = \texttt{G}(A, s, g^b, c(g^b), N_A) \> \> \\ 
     \> \sendmessageleft*[2.5cm]{\texttt{d($g^b$)}} \> \\ 
     g^b = Open(c(g^b), d(g^b)) \> \> \\
     g^{ab} = (g^b)^a \> \> \\
     AV_B^{\prime} = \texttt{G}(A, s, g^b, c(g^b), N_A)}
\end{pchstack}
The security of the above protocol comes as a corollary of the SK security of the generic 4-pass KA-based construction:
\begin{proposition}
    The 4-pass DH-based protocol is $(\advantage{DDH}{DH}[(\mathcal{B})] + 2\cdot \alpha)$-SK secure in the UM,
    where $\alpha$ is the MT emulation margin proven in Proposition \ref{CS_MT_security}.
\end{proposition}
\begin{proof}
    A particular application of Proposition \ref{gen_4_pass_KA_sk_security} with $\epsilon$ as given in Proposition \ref{2pass_dh_am} regarding SK security in the AM of the 2-pass.
\end{proof}

\section{KEM-based protocols}

When working with Key Encapsulation Mechanisms, two important differences must be highlighted:
\begin{enumerate}
    \item KEM mechanisms are, in general, not contributory, as opposed to KA mechanisms. This means that the final shared key does not come from a mutual contribution of both ends, but is unilaterally generated by one end and transmitted to the other. This paradigm, to which KEM schemes belong, is usually referred to as \textit{Key Transport} protocols.
    \item The public values that will be exchanged in the protocol are not independent. That is, it will not be formed by two public keys generated by both users, but formed by a public key and an encapsulation, whose value depends on the public key value.
\end{enumerate}
These two differences will drive the security analyses made on KEM-based protocols, and its contrasts with their KA-based counterparts.

\subsection{A KEM-based SK-secure protocol in the AM: Choice of primitive \label{kem_2_pass_sk_secure}}

The 2-pass KEM-based protocol is, in appearance, a drop-in replacement of its KA-based counterpart, in the sense that the KA instances were substituted by \textit{Encaps} and \textit{Decaps} procedures:

\begin{pchstack}[boxed, center, space=1em]
    \procedure{2-pass KEM-based protocol on the AM \pcbox{UM}}{%
     \textbf{Alice} \> \> \textbf{Bob} \\
     (s_{k_a}, p_{k_a}) \sample \texttt{KGen} (\secparam) \> \> \\
     \> \sendmessageright{top=\text{(p$_{k_a}$, s)}} \> \\
     \> \> (ct_b, K) \sample \texttt{Encaps}(p_{k_a}) \\
     \> \> \pcbox{E = \texttt{G}(B, p_{k_a}, s,  ct_b, K)} \\
     \> \sendmessageleft{top=\text{(ct$_b$, s)}} \> \\ 
     K = \texttt{Decaps}(ct_b, s_{k_a}) \> \> \\
    \pcbox{E = \texttt{G}(B, p_{k_a}, s,  ct_b, K)} \> \>}
\end{pchstack}

\begin{itemize}
    \item Alice generates a cryptographic key pair ($s_{k_a}$, $p_{k_a}$), by means of the KGen() function of the specific KEM selected. Then, the public value $p_{k_a}$ is sent to Bob, along with the session identifier.
    \item  Bob, upon reception of Alice's public key $p_{k_a}$, executes the encapsulation function on the public key received. The execution of this function yields a secret key K, and an encapsulation $ct_b$ of the secret that conforms the generated shared secret. Then, Bob sends the aforementioned encapsulation.
    \item Alice executes the decapsulation function, with inputs the encapsulation $ct_b$ received and its own secret key, to derive the shared secret K.
\end{itemize}
\subsubsection{Security}
Under the AM model, that is, under the assumption that the integrity of each message received is intact, the protocol is SK-secure, as proven in \cite{10.1007/978-3-031-35486-1_24}:

\begin{proposition}\cite{10.1007/978-3-031-35486-1_24}
    Let $\adv$ be an adversary against the SK-security of the above protocol, which interacts with at most $q$ sessions for each pair of $n_P$ parties. Then, the 2-pass KEM-based protocol is $\epsilon$-SK-secure in the AM, with 
    $$ \epsilon \leq q\cdot n_P^2 \cdot \advantage{CPA}{KEM}$$
\end{proposition}
\begin{proof}
    See \cite{10.1007/978-3-031-35486-1_24}.
\end{proof}

Under our UM model, a number of different considerations arise regarding the protocol's practical security: the attacker's possibility to generate the same shared secret on both ends, a replica for KEM-based protocols of the attack against 2-pass KA-based protocol, and a combination of them.
\paragraph*{Attack on same key on both ends}
Despite its clear similarities to the 2-pass KA-based protocol, a MitM attacker not only has the ability to establish a shared secret with each party but also, in general, the ability to generate the exact same shared secret on both, and thus all three parties possess the same shared secret value. This is a direct consequence of the non-contributory nature of Key Encapsulation Mechanisms. The attack would go as follows:
\begin{pchstack}[boxed, center, space=1em]
    \procedure[colspace=-1cm]{MitM attack on KEM schemes}{%
     \textbf{Alice}  \> \> \textbf{Mallory} \> \> \textbf{Bob} \\
     (s_{k_a}, p_{k_a}) \sample \texttt{KGen} (\secparam) \> \> \> \> \\
     \> \sendmessageright*[2.5cm]{\text{p$_{k_a}$}} \> \> \> \\
     \> \> (s_{k_e}, p_{k_e}) \sample \texttt{KGen} (\secparam) \> \> \\ 
     \> \> \> \sendmessageright*[2.5cm]{\text{p$_{k_e}$}} \> \\
     \> \> \> \> (Ct_b, K) \sample \texttt{Encaps}(p_{k_e}) \\
     \> \> \> \sendmessageleft*[2.5cm]{\text{Ct$_b$}} \> \\
     \> \> (x, K) = \texttt{Decaps$^*$}(Ct_b, s_{k_e}) \> \> \\
     \> \> (Ct_e, K) = \texttt{Encaps$^*$}(x, p_{k_a}) \> \> \\
     \> \sendmessageleft*[2.5cm]{\text{Ct$_e$}} \> \> \> \\
     K = \texttt{Decaps}(Ct_e, s_{k_a}) \> \> \> \>}
\end{pchstack}
\begin{itemize}
    \item Alice generates its KEM key pair ($s_{k_{a}}$, $p_{k_{a}}$) and sends its public key to Bob.
    \item Mallory intercepts the public key value sent, generates another key pair ($s_{k_{e}}$, $p_{k_{e}}$) and substitutes the public key value sent by Alice with its own public key $p_{k_{e}}$.
    \item Bob executes the encapsulation function with the public key value received. At this moment, all information about the key is established. Bob sends the encapsulation of the secret required to derive the shared key.
    \item Mallory intercepts the encapsulation sent and, as it has been encapsulated using its public key value, decapsulates it using the secret key. Through this process, the attacker is able to learn the secret $x$ that was encapsulated, and that forms the shared key. Therefore, the attacker simply re-encapsulates this exact same value with Alice's public key, and sends this encapsulation Ct$_e$ to them.
    \item Alice executes the decapsulation function on the encapsulation received, and its own secret key, to derive the same shared key \textit{K} as Bob and Mallory.
\end{itemize}
This possibility highlights the requirement of the session AV to be contributed by the public values involved in the key derivation, as this addition is all that is required to  thwart the attack, for deterministic-based KEMs.

The previous attack is carried on by slightly tweaking the \textit{Encaps} and \textit{Decaps} definition provided, on the attacker's side. The \textit{Encaps}* function takes also as argument the secret value to be encapsulated. Meanwhile, the \textit{Decaps}* function also outputs the secret value that was encapsulated.
\paragraph*{Replica attack on KA-based protocol \label{4.1.1.2}}
When the above protocol's session AV is contributed with public values involved in the exchange, the attack detailed above does not, in general, apply (see below). Nevertheless, the attack on the session AV shown in Section \ref{3.1.1} still applies, as a commitment value is not present in the 2-pass KEM-based protocol either:
\begin{itemize}
    \item When Alice sends a public key value $p_{k_{a}}$, the attacker intercepts it and transmits instead the public key value $p_{k_{e}}$ of a newly generated key pair ($s_{k_{e}}$, $p_{k_{e}}$). Then, Bob will execute the \textit{Encaps} function over the fraudulent public key $p_{k_{e}}$, generating a secret $K_{be}$, and the corresponding session AV, based on this secret, the encapsulation $Ct_{b}$ generated and the fraudulent public key value $p_{k_e}$. Then, Bob sends this encapsulation to Alice.
    \item Mallory would then need to intercept the encapsulation value $Ct_{b}$ and transmit instead an encapsulation value $Ct_{e_{i_{0}}}$ of an execution of the \textit{Encaps} function on Alice's public key $p_{k_{a}}$. But this value must be such that the session AV generated by the shared secret $K_{ae}$ and the public values $Ct_{e}$ and $p_{k_a}$ its the same as the session AV generated by Bob (which Mallory is able to calculate, as possesses all elements involved in its generation). Therefore, the attacker would generate enough encapsulation values $Ct_{e_{i}}$, looping through \textit{Encaps} executions, until one encapsulation $Ct_{e_{i_{0}}}$ satisfies the required condition.
\end{itemize}
\begin{pchstack}[boxed, center, space=1em]
    \procedure[colspace=-1.75cm]{MitM attack on 2-pass KEM-based protocol}{%
     \textbf{Alice}  \> \> \textbf{Mallory} \> \> \textbf{Bob} \\
     (s_{k_a}, p_{k_a}) \sample \texttt{KGen} (\secparam) \> \> \> \> \\
     \> \sendmessageright{top=\text{(p$_{k_a}$, s}} \> \> \> \\
     \> \> (s_{k_e}, p_{k_e}) \sample \texttt{KGen} (\secparam) \> \> \\ 
     \> \> \> \sendmessageright{top=\text{(p$_{k_e}$, s}} \> \\
     \> \> \> \> (Ct_b, K_{be}) \sample \texttt{Encaps}(p_{k_e}) \\
     \> \> \> \>  E_{be} = \texttt{G}(B, p_{k_e}, s,  Ct_b, K_{be}) \\
     \> \> \> \sendmessageleft{top=\text{(Ct$_b$, s}} \> \\
     \> \> K_{eb} = \texttt{Decaps}(Ct_b, s_{k_e}) \> \> \\
     \> \> E_{eb} = \texttt{G}(B, p_{k_e}, s,  Ct_b, K_{eb}) \> \> \\
     \> \> \texttt{While $E_{ea} \neq E_{eb}$ do:} \> \> \\
     \> \> \quad (Ct_e, K_{ea}) \sample \texttt{Encaps}(p_{k_a}) \> \> \\
     \> \> \quad E_{ea} = \texttt{G}(B, p_{k_a}, s,  Ct_e, K_{ea}) \> \> \\
     \> \sendmessageleft{top=\text{(Ct$_e$, s}} \> \> \> \\
     K_{ae} = \texttt{Decaps}(Ct_e, s_{k_a}) \> \> \> \> \\
     E_{ae} = \texttt{G}(B, p_{k_a}, s,  Ct_e, K_{ae}) \> \> \> \>
     }
\end{pchstack}

As in Section \ref{3.1.1}, the above diagram provides the following lemma:
\begin{proposition}
The 2-pass KEM-based protocol has SK-security in the UM model, regardless of the KEM scheme employed, bounded by the following value:
\begin{align*}
    \advantage{\texttt{Key-Ind}}{\pi_{UM}}[(\adv)] \geq 1 - \left(1 - \frac{1}{2^l}\right)^{\min\{q, |Y|\}}
\end{align*}
where $l$ is the output length of the AV random oracle, $q$ is the maximum number of distinct queries the adversary is able to make against the AV random oracle and $Y: = \{(ct, K) :  (ct, K) \sample \texttt{Encaps}(\pk_p)\}$ is the set of all possible \textit{Encaps} results of a certain public key $\pk_p$.
\end{proposition}
\begin{proof}
The above diagram can be replicated exactly as a routine for an adversary $\adv$ in the UM unless it is unable to generate an AV collision between $E_{ab}$ and $E_{eb}$.

Note the advantage of an adversary in generating said collision is defined exactly as $\advantage{\texttt{combined}}{\texttt{CHF}}[(l, \mathcal{Y})]$, where $\mathcal{Y}$ is an algorithm that returns values from the domain $Y$ by executing the \textit{Encaps} procedure with $\pk_p$. 

Therefore, since $Y$ does not impose any cryptographic restriction that the pairs $(ct, K)$ must verify, we have $\advantage{\texttt{sample}}{Y}[(\mathcal{A})] = 1$. Consequently, the adversary can construct a set $Y_q$ of $q$ different values $((\pk_i, K_i))$.

Since the $SK$ advantage can be lower bounded by the advantage of any plausible attack, we have that:
\begin{align*}
    \advantage{\texttt{Key-Ind}}{\pi_{UM}}[(\adv)] \geq \advantage{\texttt{combined}}{\texttt{CHF}}[(\adv, l, \mathcal{Y})] = 1 - \left(1 - \frac{1}{2^l}\right)^{\min\{q, |Y|\}}
\end{align*}
where the last equality is provided by Proposition \ref{combined_adv_with_Y}.
\end{proof}

As noted in the KA setting, while we do not claim that this protocol is always insecure, it is important to realize that, depending upon the capabilities of the adversary in terms of the maximum number of queries $q$, the protocol could become insecure. 

This realization motivates the need to come up with more robust protocols in which the security is guaranteed, in a KEM-based setting.

\paragraph*{Combination under probabilistic PKE algorithm \label{4.1.1.3}}
In the above setting, if the underlying PKE is deterministic, an attacker is not able to generate the same key on both ends. This is due to the fact that, in case that the key value is fixed to be the same on both ends, the attacker would need a different encapsulation value generated on each iteration, over the same secret \textit{t} that generates the shared key, and this cannot be achieved on a deterministic PKE. 

On the other hand, if the underlying PKE scheme is indeed probabilistic and has not undergone a de-randomization procedure, the two attacks can happen simultaneously, as each execution of the encapsulation procedure on the same secret value yields different results.

\subsection{A KEM-based SK-secure protocol in the UM: Compiler application}
The above protocol is shown to be secure on the AM, if the KEM selected is IND-CPA secure. To avoid the vulnerabilities that the protocol presented on the UM, the application of a compiler based on the MT-authenticator defined on section can be applied, yielding the following (unoptimized) protocol:
\begin{pchstack}[boxed, center, space=1em]
    \procedure{6-pass KEM-based protocol (unoptimized)}{%
     \textbf{Alice} \> \> \textbf{Bob} \\
     (\pk,\sk) \sample \texttt{KGen} (\secparam) \> \> \\
     m \sample \texttt{\{0,1\}$^n$} \> \> \\
    (c(m), d(m)) \sample \texttt{Com}(m) \> \> \\
     \> \sendmessageright{top=\texttt{(c(m), s, \pk)}} \> \\
     \> \>  N_B \sample \texttt{\{0,1\}$^n$} \\
     \> \sendmessageleft{top=\texttt{$N_B$}} \> \\
     AV_A = \texttt{G}(B, c(m), s, \pk, N_B, m) \> \> \\
     \> \sendmessageright{top=\texttt{d(m)}} \> \\
     \> \> m = Open(c(m), d(m)) \\
     \> \> AV_A = \texttt{G}(B, c(m), s, \pk, N_B, m) \> \> \pclb
     \pcintertext[dotted]{$\lambda_I - \lambda_R$ division} \\
     \> \> (ct, K) \sample \texttt{Encaps}(\pk) \\ 
     \> \> (c(ct), d(ct)) \sample \texttt{Com}(ct) \\
     \> \sendmessageleft{top=\texttt{(c(ct), s)}} \> \\ 
      N_A \sample \texttt{\{0,1\}$^n$} \> \> \\
     \> \sendmessageright{top=\texttt{$N_A$}} \> \\ 
     \> \> AV_B = \texttt{G}(A, c(ct), s, N_A, ct) \> \> \\ 
     \> \sendmessageleft{top=\texttt{d(ct)}} \> \\ 
     ct = Open(c(ct), d(ct)) \> \> \\
     K = \texttt{Decaps}(\sk, ct) \> \> \\
     AV_B = \texttt{G}(A, c(ct), s, N_A, ct)}
\end{pchstack}
where each of the two separated blocks represent a (almost) canonical application of the CS-based MT-authenticator to one message of the 2-pass KA-based protocol.

As in Section, the only conceptual difference is that the MT-authenticators provide an additional value in the first message exchanged, i.e. the session identifier, which is also appended to the AV calculation.

To be precise, in this case the first MT-authenticator exchange includes two additional values in the first message: a session identifier and an ephemeral public key, while the second exchange includes just one additional value in the first message.

Note that, while the public key added on the first message is later employed on constructing the encapsulation value, it remains independent towards all the information exchanged in the first MT-authenticator exchange and therefore it does not disrupt the security of the authenticator.

\subsubsection{Security}
The security of the above protocol is a direct application of propositions \ref{CS_MT_security}, \ref{gen_MT_compiler} and \ref{ThCompilerSK}, along with the security of the 2-pass KEM-based protocol in the AM. Formally:

\begin{proposition}\label{gen_4_pass_KEM_sk_security}
    Let the 2-pass KEM-based protocol be a $\epsilon$-SK-secure protocol in the AM. The 6-pass KEM-based protocol is $(\epsilon + 2\cdot \alpha)$-SK secure in the UM,
    where $\alpha$ is the MT emulation margin proven in Proposition \ref{CS_MT_security}.
\end{proposition}
\begin{proof}
The CS-based MT-authenticator, denoted as $\lambda$,  $\alpha$-emulates the MT-authenticator by Proposition \ref{CS_MT_security}.

Then, Proposition \ref{gen_MT_compiler} proves that the sequence $\Lambda = (\lambda, \lambda)$ defines a compiler $C_{\Lambda}$ such that, for any protocol $\Pi$ in the AM, $C_{\Lambda}(\Pi)$ $2\cdot \alpha$ emulates $\Pi$ in the UM.

Then, by Theorem $\ref{ThCompilerSK}$, $C_{\Lambda}(\Pi)$ is an $\epsilon'$-SK-secure protocol in the UM, where $\epsilon' := \epsilon + 2 \cdot\alpha$.
\end{proof}

The only modification over a canonical application of the CS-based compiler is the inclusion of the $\pk$ value in the first message of the first exchange, and the inclusion of the session entropy in the first message of the two exchanges. As discussed above, this is perfectly valid, since we can add any value transmitted in the first of the CS MT-authenticator without any repercussion to the validity of the authenticator, if they remain independent to the rest of the values transmitted on the exchange, and are part of the AV calculation.
 
This protocol provides mutual authentication between parties, as the responder authenticates value $E_B$ against Alice, and Alice does the same against Bob with value $E_A$.

Note that this specification does not address how the session identifier is actually generated. This could just be an unique identifier, formalized by each protocol specification. 
Alternatively, the session identifier could also be formed by values exchanged in the protocol run, if parties can verify that no incomplete sessions between the same parties have the same session identifier. For example, $s = N_A || N_B$ could be a selection for the session value, as both $N_A$ and $N_B$ are random $n$-bit strings generated ephemerally on each protocol run.

\subsubsection{Protocol optimization}
The above protocol represents a canonical application of the CS-based compiler: The first three messages are the MT-authentication of the message $m$, along with the inclusion of extra element $\pk$ and the session identifier $s$ sent on the first message. 

The second part of the exchange amounts to the MT-authentication of element \textit{ct} sent by the responder, and we define this exchange as $\lambda_R$. Then, by sending \textit{in parallel}:
\begin{itemize}
    \item the 2nd message of $\lambda_I$ with the 1st message of $\lambda_R$ as the second message of the protocol, and
    \item the 3rd message of $\lambda_I$ with the 2nd of $\lambda_R$ as the 3rd message of the protocol.
\end{itemize}
we have the optimized 4-pass protocol:

\begin{pchstack}[boxed, center, space=1em]
    \procedure{4-pass KEM-based protocol (optimized)}{%
     \textbf{Alice} \> \> \textbf{Bob} \\
     (\pk,\sk) \sample \texttt{KGen} (\secparam) \> \> \\
     m \sample \texttt{\{0,1\}$^n$} \> \> \\
    (d(m), c(m)) \sample \texttt{Com}(m) \> \> \\
     \> \sendmessageright*[2.5cm]{\texttt{(\pk , s, c(m))}} \> \\
     \> \>  N_B \sample \texttt{\{0,1\}$^n$} \\
     \> \> (ct, K) \sample \texttt{Encaps}(\pk) \\ 
     \> \> (d(ct), c(ct)) \sample \texttt{Com} (ct) \\
     \> \sendmessageleft*[2.5cm]{\texttt{(c(ct), $N_B$)}} \> \\ 
      N_A \sample \texttt{\{0,1\}$^n$} \> \> \\
     E_A = \texttt{G}(B, m, c(m), N_B, \pk , s) \> \> \\ 
     \> \sendmessageright*[2.5cm]{\texttt{(d(m), $N_A$)}} \> \\
     \> \> m = Open(c(m), d(m)) \\
     \> \> E_A^{\prime} = \texttt{G}(B, m, c(m), N_B, \pk , s) \> \> \\ 
     \> \> E_B = \texttt{G}(A, ct, c(ct), N_A) \> \> \\ 
     \> \sendmessageleft*[2.5cm]{\texttt{d(ct)}} \> \\ 
     ct = Open(c(ct), d(ct)) \> \> \\
     K = \texttt{Decaps}(\sk, ct) \> \> \\
     E_B^{\prime} = \texttt{G}(A, ct, c(ct), N_A)}
\end{pchstack}
\begin{itemize}
    \item First, Alice generates a key pair $(\sk, \pk)$ and a random value \textit{m} uniformly, and sends the public key value $\pk$ and a commitment \textit{c(m)} of \textit{m}, along with the session identifier $s$.
    \item Bob, upon reception of the public key value $\pk$, executes the \textit{Encaps} function on the public key received, generating the secret key value \textit{K} and an encapsulation \textit{ct}. Then, generates a commitment \textit{c(ct)} of this encapsulation value and sends it, along with a random challenge $N_B$.
    \item The inititator, upon reception of the commitment value \textit{c(ct)}, generates its own challenge $N_A$ and sends it along with the opening value \textit{d(m)} of \textit{m}.
    \item Bob verifies the commitment received at the beginning with the opening value received in this iteration and the actual value committed and, upon successful verification, calculates its session AV and sends the opening value \textit{d(ct)} of the encapsulation value \textit{ct} committed before.
    \item Alice verifies that the value committed matches the commitment stored, and then proceeds to derive the shared secret key using the \textit{Decaps} function, and its corresponding session AV.
\end{itemize}

The security of the above optimized protocol comes as a corollary of the SK security of the unoptimized 6-pass KEM-based construction:
\begin{proposition}
    The 4-pass KEM-based protocol is $(\advantage{IND-CPA}{KEM}[(\mathcal{B})] + 2\cdot \alpha)$-SK secure in the UM,
    where $\alpha$ is the MT emulation margin proven in Proposition \ref{CS_MT_security}.
\end{proposition}
\begin{proof}
    Application of Proposition \ref{gen_4_pass_KEM_sk_security}, as messages have only been sent in parallel.
\end{proof}

\section{Conclusions}
First, we have defined an alternative \textit{Unauthenticated Model}, following the same ideas first explained in \cite{cryptoeprint:1998/009}, but providing a different source of authentication, not based on the assumption of a initial safe exchange of cryptographic material, but rather a final safe verification phase of protocol-generated elements. This modification is consistent with the use of key exchange protocols under different security scenarios from the ones provided by the original authentication source. With the appropriate modification, we have shown that all relevant results from their model still apply to ours.

We have shown how to construct secure KEM-based protocols over this new UM, which requires to consider the plausible presence of MitM attackers. We have done so by first studying how equivalent KA-based protocols would be constructed in this model, and the difficulties that naturally arise from the differences between standard KA constructions like the Diffie-Hellman paradigm and KEM constructions.

The protocols constructed follow on the steps taken in the KA settings, that is, generating an additional value that \textit{resumes} the session exchange between the two parties, and forcing any attacker to commit to a specific value before it has all the necessary information to calculate such value, so its presence is detected by inconsistencies in this value.

For the 4-pass protocol, it is indeed possible for the responder to generate a commitment of its public value, but it is required to show an actual separation, in terms of Alice's interaction, between the commitment value and the actual value. Therefore, a \textit{'proof of identity'} is generated, by means of yet another commitment, this time of a random value large enough to ensure Alice's identity. With this measures, we show the impossibility of a MitM attack and, consequently, the security of the protocol over our UM.

These protocols, based on the authentication differences between the model started by \cite{cryptoeprint:1998/009} and followed by \cite{10.1007/978-3-031-35486-1_24}, account for fewer public key operations to derive the authentication source. The authentication source provided in this new model is intended to correspond with other practical applications to communication protocols.

\bibliography{lib_published.bib}
\bibliographystyle{alpha}

\nocite{10.1007/978-3-030-81652-0_16}
\nocite{10.1007/11426639_8}

\end{document}